\newtheorem{theorem}{Theorem}
\newtheorem{lemma}{Lemma}
\newtheorem{defn}{Definition}
\newcommand{\ba}{\begin{eqnarray}}
\newcommand{\ea}{\end{eqnarray}}
\newcommand{\ban}{\begin{eqnarray*}}
\newcommand{\ean}{\end{eqnarray*}}
\newcommand{\pg}{P_{\textnormal{guess}}}
\newcommand{\pgs}{P_{\textnormal{guess}}^{*}}
\newcommand{\id}{\mathbb{1}}
\definecolor{darkgreen}{rgb}{0,0.5,0}
\newcommand{\hmax}{{H_{\textnormal{max}}}}
\newcommand{\lmax}{{\lambda_{\textnormal{max}}}}
\newcommand{\psecr}{{p_{\textnormal{secr}}}}
\newcommand{\vmax}{{v_{\textnormal{max}}}}
\newcommand{\umax}{{u_{\textnormal{max}}}}
\begin{document}

\title{Maximal intrinsic randomness of a quantum state}

\author{Shuyang Meng}
\affiliation{Department of Physics, National University of Singapore, 2 Science Drive 3, Singapore 117542} 

\author{Fionnuala Curran}
\affiliation{ICFO-Institut de Ci\`encies Fot\`oniques, The Barcelona Institute of Science and Technology,
Av. Carl Friedrich Gauss 3, 08860 Castelldefels (Barcelona), Spain}

\author{Gabriel Senno}
\affiliation{Quside Technologies S.L., C/Esteve Terradas 1, 08860 Castelldefels,
Barcelona, Spain}

\author{Victoria J. Wright}
\affiliation{ICFO-Institut de Ci\`encies Fot\`oniques, The Barcelona Institute of Science and Technology,
Av. Carl Friedrich Gauss 3, 08860 Castelldefels (Barcelona), Spain}

\author{M\'at\'e Farkas}
\affiliation{Department of Mathematics, University of York, Heslington, York, YO10 5DD, United Kingdom}
\affiliation{ICFO-Institut de Ci\`encies Fot\`oniques, The Barcelona Institute of Science and Technology,
Av. Carl Friedrich Gauss 3, 08860 Castelldefels (Barcelona), Spain}

\author{Valerio Scarani}
\affiliation{Centre for Quantum Technologies, National University of Singapore, 3 Science Drive 2, Singapore 117543}
\affiliation{Department of Physics, National University of Singapore, 2 Science Drive 3, Singapore 117542} 

\author{Antonio Ac\'{\i}n}
\affiliation{ICFO-Institut de Ci\`encies Fot\`oniques, The Barcelona Institute of Science and Technology,
Av. Carl Friedrich Gauss 3, 08860 Castelldefels (Barcelona), Spain}
\affiliation{ICREA - Instituci\'o Catalana de Recerca i Estudis Avan\c{c}ats, 08010 Barcelona, Spain}

\date{\today}

\begin{abstract}
One of the most counterintuitive aspects of quantum theory is its claim that there is `intrinsic' randomness in the physical world. Quantum information science has greatly progressed in the study of intrinsic, or secret, quantum randomness in the past decade. With much emphasis on device-independent and semi-device-independent bounds, one of the most basic questions has escaped attention: how much intrinsic randomness can be extracted from a given state $\rho$, and what measurements achieve this bound? We answer this question for three different randomness quantifiers: the conditional min-entropy, the conditional von Neumann entropy and the conditional max-entropy. For the first, we solve the min-max problem of finding the projective measurement that minimises the maximal guessing probability of an eavesdropper. The result is that one can guarantee an amount of conditional min-entropy $H^*_{\textrm{min}}=-\log_2\pgs(\rho)$ with $\pgs(\rho)=\frac{1}{d}\,\left(\tr \sqrt{\rho}\right)^2$ by performing suitable projective measurements.  For the conditional von Neumann entropy, we find that the maximal value is $H^{*}= \log_{2}d-S(\rho)$, with $S(\rho)$ the von Neumann entropy of $\rho$, while for the conditional max-entropy, we find the maximal value $H^{*}_\textnormal{max}=\log_{2}d + \log_{2}\lmax(\rho)$, where $\lmax(\rho)$ is the largest eigenvalue of $\rho$. Optimal values for $H^{*}_{\textrm{min}}$, $H^{*}$ and $H^{*}_\textnormal{max}$ are achieved by measuring in any basis that is unbiased with respect to the eigenbasis of $\rho$, as well as by other, less intuitive, measurements.
\end{abstract}

\maketitle

\section{Introduction}

One of the core differences between classical and quantum physics is the latter's probabilistic character, which is irreducible to ignorance of underlying variables. This difference has fundamental implications for our worldview, but it is also attractive as a natural source of randomness for practical uses. Indeed, Geiger counting was already used as a source of physical randomness in the second half of the 20th century. In the past two decades, with the development of quantum information science, a large number of quantum random number generators (QRNGs) have been designed, and many have been implemented, usually with light (see \cite{RevModPhys.89.015004,review2022} for comprehensive reviews). The \textit{amount of randomness} is naturally captured by the \textit{guessing probability} $\pg$: the higher the probability that the random variable is guessed, the smaller the randomness. This intuitive characterisation was found to have operational meaning: the \textit{min-entropy} $H_{\textrm{min}}=-\log_2\pg$ quantifies (informally) the fraction of perfect coin tosses that can be extracted from a string generated by the available source. But randomness is not an absolute notion: one has to specify \textit{for whom} the source should be partly unpredictable. For mere sampling purposes, it might be sufficient to take the observed probabilities at face value; for cryptographic applications, however, one needs to estimate the probability that \textit{an adversary, Eve,} guesses the outcomes. The resulting randomness is called \textit{secret randomness}, or \textit{intrinsic randomness}.

The computation of intrinsic randomness using quantum resources and against a quantum adversary has been studied from different perspectives. When considering a user with classical data correlated with quantum information in the hands of an adversary, the min-entropy quantifies the amount of perfect random bits that the user can establish~\cite{Konig_2009}. The question was also addressed for the task of quantum key distribution, which is the extraction of secret \textit{shared} randomness. It was in this context that the idea of device-independent certification was born: the possibility of bounding the amount of randomness in a black-box setting, based on the observation of Bell-nonlocal correlations \cite{DI2007}. Next, it was noticed that device-independent certification can be performed for randomness as well \cite{colbeckPhD,Nature2010}, providing the first disruptive case for quantum randomness in a non-shared setting \footnote{As long as \emph{process randomness} requires characterised devices, classical and quantum RNGs compete on the same grounds for speed, stability, practicality etc. But, given an alleged RNG as a black box, on classical devices, one can only test product randomness with statistical tests. While process randomness implies product randomness, the opposite is certainly not true: one could have recorded a long enough list of random numbers, and the device under study may just be reading deterministically from that record.}. This breakthrough happened as the race to demonstrate loophole-free Bell tests was taking up speed. There followed an explosion of designs and implementations of QRNGs certifiable under various assumptions, from device-independent (disruptive, but hard to implement), to semi-device-independent in various forms, to fully characterised (practical and fast, but requiring a precise modelling of the setups). For these developments, we refer to the reviews \cite{RevModPhys.89.015004,Bera_2017,reviewAM,review2022}.

In this flurry of activity, one of the most basic questions
was somehow left out: \textit{how much secret randomness can be extracted from a known state $\rho$}. In this paper, we solve this problem for three of the most natural and operational measures of randomness: the conditional min-entropy, the conditional von Neumann entropy and the conditional max-entropy. For the first, we show that the answer is $H^*_{\textrm{min}}=-\log_2\pgs(\rho)$, with
\ba\label{eqn: main}
   \pgs(\rho)=\frac{1}{d}\,\big(\tr \sqrt{\rho}\big)^2\,,
\ea 
where $d$ is the dimension of the Hilbert space of the system, assumed to be finite. We find a family of measurements that generate this amount of randomness, which is closely related to the concept of `pretty good measurements'~\cite{hausladen1994pretty}, originally used as a close-to-optimal way to distinguish an ensemble of states. For the second, we find the maximal value
\begin{align}\label{eqn: H_star}
    H^{*}= \log_{2}d -S(\rho)\,,
\end{align}
where $S(\rho)=-\tr\Big(\rho \log_2 \rho \Big)$ is the von Neumman entropy of $\rho$, while for the third, we find
\begin{align}\label{eqn: H_max_star}
    H^{*}_{\textnormal{max}}= \log_{2}d +\log_2\lmax(\rho)\,,
\end{align}
where $\lmax(\rho)$ is the largest eigenvalue of $\rho$.
Interestingly, for $d>2$, we find that some measurements maximise one of $H_{\textrm{min}}$, $H$ and $H_{\textnormal{max}}$, but not the other two.

\section{Qubit example}
A case study will help to introduce the main ideas. Alice has a source that produces a qubit. She has characterised its state to the best of her knowledge and found it to be
\begin{equation}
\rho=\frac{1}{2}(\id+m\sigma_z)=\frac{1+m}{2}\ketbra{0}+\frac{1-m}{2}\ketbra{1}    
\end{equation}
for some $0\leq m\leq 1$. If she measures $\sigma_x$, her observed statistics will be those of a perfect unbiased coin: $P_A(+1)=P_A(-1)=\frac{1}{2}$. Suppose now that what the source really does is produce a pure state in each round, specifically half of the rounds $\ket{\chi_+}$ and half of the rounds $\ket{\chi_-}$, with 
\begin{equation}
    \ket{\chi_{\pm}}=\sqrt{\frac{1\pm\sqrt{1-m^2}}{2}}\ket{+x}+\sqrt{\frac{1\mp\sqrt{1-m^2}}{2}}\ket{-x}
\end{equation} (indeed, $\frac{1}{2}\ket{\chi_+}\bra{\chi_+}+\frac{1}{2}\ket{\chi_-}\bra{\chi_-}=\rho$). If Eve knows the working of the source exactly, she will guess $i=+1$ ($i=-1$) in the rounds when the source sent out $\ket{\chi_+}$ ($\ket{\chi_-}$). Her guess will then be correct with probability
\begin{equation}\label{eqn: optex}
    \pg=\frac{1}{2}\Big(1+\sqrt{1-m^2}\Big)\,,
\end{equation} 
which is strictly larger than $\frac{1}{2}$ when $m<1$ (i.e.~when $\rho$ is mixed). Thus, the intrinsic randomness of Alice's protocol is less than her apparent perfect randomness. In particular, there is no secret randomness in the state $\rho=\frac{1}{2}\id$, since $\pg=1$ for $m=0$.

As will be expanded on in what follows, two things are already known about this case study and its generalisation to higher dimensions. First: we presented this example with Eve having perfect classical information about the source, in the sense that she knows at each instance which state has been prepared and accordingly makes her guess on Alice's measurement outcome. However, the result is unchanged if Eve holds quantum side-information. Eve then holds a purification of Alice's state, and she measures her own system to guess Alice's result. Since the two scenarios are equivalent in terms of the guessing probability, we will move from one to the other when convenient for the argumentation. Second: having fixed Alice's protocol (both the state and the measurement), the maximisation of $\pg$ over all decompositions of $\rho$ is a known semidefinite program (SDP) \cite{Law_2014}; in the case study, we have presented the optimal decomposition. What is not known is whether $\sigma_x$ is the best measurement for Alice, even in the presence of Eve: could another measurement on the same state $\rho$ decrease Eve's guessing probability, at the expense of biasing the observed $P_A$? 
We set out to solve this min-max problem, and thus determine the maximal amount of secret randomness that can be extracted from $\rho$.

\section{Setting of the problem}
Alice holds a quantum state $\rho$ from a Hilbert space of dimension $d$. We want to determine how much intrinsic randomness she can extract from $\rho$ and which measurement achieves this maximum. We consider only
measurements $\mathcal{M}=\{M_i\}_i$ which are projective, i.e. $M_iM_j= \delta_{ij}M_i$, where $\delta_{ij}$ is the Kronecker delta (we discuss general POVMs at the end of this section). To quantify how intrinsically random, that is, how unpredictable, Alice's measurement outcome is, one considers the existence of an eavesdropper, Eve, who has a more detailed knowledge than Alice about the process, but cannot actively influence it (she is `outside the lab'). Concretely, in every round, Eve knows the true state $\rho_c$ produced by the source. Given this knowledge, she guesses the most likely outcome $i=i(c)$ for that round. Without loss of generality, we can group together all of Eve's states that lead to the same guessed outcome, since Eve does not gain anything in treating them as distinct. We denote by $\rho_i$ the states seen by Eve, sub-normalised such that $q_i=\tr\rho_i$ is the probability that Eve's most likely outcome is $i$. These states must satisfy $\sum_i \rho_i=\rho$.

Having set this stage, Eve's average guessing probability is $P_{\textrm{guess}}\big(\{\rho_i\},\mathcal{M}\big)=\sum_{i}\tr(M_{i}\rho_i)$. Since we don't know the true states $\rho_i$, we need to consider the worst case scenario, i.e.~the decomposition that maximises Eve's guessing probability,
\ba\label{eqn: sdp}
P_{\textrm{guess}}(\rho,\mathcal{M})&=&\max_{\{\rho_i\}}\sum_i \tr(M_i\rho_i)\\&\textrm{s.t.}& \rho_i\geq 0\,,\;\sum_i \rho_i=\rho\,. \nonumber\ea
This optimisation is an SDP, and so can be solved efficiently. In order to determine, \textit{the maximal amount of secret randomness that can be extracted from the known state $\rho$}, one need to optimize Eq. \eqref{eqn: sdp} over Alice's measurement, i.e.~compute
\begin{equation}\label{eqn: defpg} 
P_{\textrm{guess}}^*(\rho)=\min_{\mathcal{M} \in \Pi}P_{\textrm{guess}}(\rho,\mathcal{M})\,,
\end{equation}
where $\Pi$ is the set of all projective measurements.
Our main result is to show that Eq.~\eqref{eqn: main} is the solution to the optimisation \eqref{eqn: defpg}.

The search for an optimal measurement could have been extended to the larger set of Positive Operator-Valued Measures (POVMs), but the operational interpretation in our context is unclear. Recall that our goal is to quantify the secret randomness \textit{in the state $\rho$}. When implementing a POVM, however, the projective measurement acts on the given state $\rho$ plus an auxiliary system, so part of the obtained randomness may come from the latter. In fact, for extremal measurements minimising the guessing probability, the auxiliary system has to be in a pure state, say $\ket a$, of dimension $d_A$ \cite{Senno_2023}. It follows from our main result that the maximal amount of randomness obtained when implementing a projective measurement on the global state is $\pgs\big(\rho\otimes\ketbra a\big)$. It is easy, however, to see that $\pgs\big(\rho\otimes\ketbra a\big)=\frac{1}{d_A}\pgs(\rho)$, that is, the optimal guessing probability is equal to that obtained by performing the corresponding optimal projective measurements independently on the system and the auxiliary.

Thus, the extra randomness supplied by using 
the optimal POVM is exactly equal to the 
intrinsic randomness of the auxiliary system used to implement the POVM, so we view
it as arising from the auxiliary rather than from $\rho$ itself.

\section{Main results}
\begin{theorem}\label{thm: main}
The maximal amount of secret randomness that can be extracted from a quantum state $\rho$ using a projective measurement is given by $H^*_{\textrm{min}}=-\log_2 \pgs(\rho)$ with $\pgs(\rho)=\frac{1}{d}\,\left(\tr \sqrt{\rho}\right)^2$.
\end{theorem}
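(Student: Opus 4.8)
The plan is to treat the definition \eqref{eqn: defpg} as a min-max problem and to sandwich $\pgs(\rho)$ between a matching upper and lower bound, both organised around the scalars $a_i:=\bra{e_i}\sqrt{\rho}\ket{e_i}$ attached to a measurement basis $\{\ket{e_i}\}$. The first step is to dualise the inner SDP \eqref{eqn: sdp}: introducing a Hermitian multiplier $Z$ for the constraint $\sum_i\rho_i=\rho$ and eliminating the $\rho_i\geq 0$, one obtains the dual
\begin{equation}
P_{\textrm{guess}}(\rho,\mathcal{M})=\min_{Z}\ \tr(Z\rho)\quad\textrm{s.t.}\quad Z\geq M_i\ \ \forall i.
\end{equation}
This reformulation is what makes both directions tractable: by weak duality every feasible $Z$ upper-bounds the guessing probability, while every explicit decomposition $\{\rho_i\}$ lower-bounds it, so I never need to solve the SDP exactly.

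For the upper bound on $\pgs(\rho)$ I would exhibit a single good measurement. Writing $\rho=\sum_j\lambda_j\ketbra{\phi_j}$ and taking any basis $\{\ket{e_i}\}$ unbiased with respect to the eigenbasis, i.e. $|\braket{e_i}{\phi_j}|^2=1/d$ (the Fourier basis works in every dimension), the key computation is $a_i=\bra{e_i}\sqrt{\rho}\ket{e_i}=\frac1d\sum_j\sqrt{\lambda_j}=\frac1d\tr\sqrt{\rho}$, independent of $i$. I would then guess the dual certificate $Z=\frac{\tr\sqrt{\rho}}{d}\,\rho^{-1/2}$, which immediately gives $\tr(Z\rho)=\frac1d(\tr\sqrt{\rho})^2$, and verify feasibility $Z\geq\ketbra{e_i}$ through the Schur-complement criterion $\bra{e_i}Z^{-1}\ket{e_i}\leq 1$; the computation above makes this hold with equality. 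Weak duality then yields $\pgs(\rho)\leq\frac1d(\tr\sqrt{\rho})^2$.

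For the lower bound I would show $P_{\textrm{guess}}(\rho,\mathcal{M})\geq\frac1d(\tr\sqrt{\rho})^2$ for \emph{every} projective $\mathcal{M}=\{M_i\}$. The natural feasible decomposition is $\rho_i=\sqrt{\rho}\,M_i\,\sqrt{\rho}$, which is positive and sums to $\rho$. Choosing an orthonormal basis $\{\ket{e_k}\}$ refining the measurement, $M_i=\sum_{k\in S_i}\ketbra{e_k}$, one has $\tr(M_i\sqrt{\rho}M_i\sqrt{\rho})=\sum_{k,l\in S_i}|\bra{e_k}\sqrt{\rho}\ket{e_l}|^2\geq\sum_{k\in S_i}a_k^2$, so discarding off-diagonal terms gives $P_{\textrm{guess}}(\rho,\mathcal{M})\geq\sum_k a_k^2$. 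Since $\sum_k a_k=\tr\sqrt{\rho}$ is basis-independent, Cauchy--Schwarz delivers $\sum_k a_k^2\geq\frac1d(\tr\sqrt{\rho})^2$, uniformly over all projective measurements (and handling higher-rank projectors at once). Combining the two bounds gives $\pgs(\rho)=\frac1d(\tr\sqrt{\rho})^2$ and hence $H^*_{\textrm{min}}=-\log_2\pgs(\rho)$.

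The main obstacle is less in either individual estimate than in identifying the correct optimiser and certificate: the identity $\bra{e_i}\sqrt{\rho}\ket{e_i}=\frac1d\tr\sqrt{\rho}$ under unbiasedness is precisely what makes the Cauchy--Schwarz step tight and simultaneously saturates the dual constraint, so recognising $Z\propto\rho^{-1/2}$ as the right ansatz is the crux of the argument. A secondary technical point is that $\rho^{-1/2}$ is ill-defined when $\rho$ is rank-deficient; I would resolve this either by restricting $Z$ to the support of $\rho$ and adding a divergent multiplier on the kernel (whose contribution to $\tr(Z\rho)$ vanishes), recovering feasibility in the limit, or by proving the full-rank case and extending to all $\rho$ by continuity of both $\pgs(\rho)$ and the claimed value.
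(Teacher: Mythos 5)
Your proposal is correct, and it is essentially a streamlined version of the paper's \emph{alternative} proof (Appendix~\ref{app:proof2}) rather than of the main-text argument. For achievability you use exactly the same dual certificate as Appendix~\ref{app:proof2}, $Z=\frac{\tr\sqrt{\rho}}{d}\rho^{-1/2}$, with feasibility checked via $\langle e_i|Z^{-1}|e_i\rangle=\frac{d}{\tr\sqrt{\rho}}\langle e_i|\sqrt{\rho}|e_i\rangle=1$ for an unbiased basis --- equivalent to the paper's Cauchy--Schwarz verification on $\rho^{1/4}\ket{m_i}$, $\rho^{1/4}\ket{\phi}$. Where you genuinely depart is the universal lower bound: the paper proves $P_{\textrm{guess}}(\rho,\mathcal{M})\geq\frac{1}{d}(\tr\sqrt{\rho})^2$ either via the geometric-coherence identity $P_{\textrm{guess}}=\max_{\sigma\in\mathcal{I}_\mathcal{M}}F(\rho,\sigma)$ from \cite{Coles_2012} (Lemma~\ref{mainlemma1}) or via Renner's min-entropy chain rule (Appendix~\ref{app: bound}), whereas you simply exhibit the primal-feasible decomposition $\rho_i=\sqrt{\rho}\,M_i\sqrt{\rho}$, drop off-diagonal terms, and apply Cauchy--Schwarz to $\sum_k a_k^2$ with $\sum_k a_k=\tr\sqrt{\rho}$. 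This is more elementary and self-contained (no external theorems), and it handles higher-rank projectors directly, bypassing the paper's separate rank-one reduction in Appendix~\ref{app:rank1}; the paper uses the same decomposition only to certify attainment for measurements already satisfying \eqref{eqn: equality}. What you lose relative to the paper is the structural byproduct of its main proof --- the necessary-and-sufficient condition \eqref{eqn: equality} on optimal measurements and the connection to pretty-good measurements --- though the equality analysis of your Cauchy--Schwarz step would recover the necessity of \eqref{eqn: equality} for rank-one measurements if you wanted it. The rank-deficient case is a real but standard wrinkle: note that for finite penalty $c$ on $\ker\rho$ the certificate is strictly infeasible whenever some $\ket{e_i}$ has a component in the kernel (since $\langle e_i|Z^{-1}|e_i\rangle=1+c^{-1}\|\Pi_{\ker}\ket{e_i}\|^2>1$), so you need either the $(1+\epsilon)Z$ limiting argument you allude to, the continuity argument, or the paper's device of projecting the whole problem onto $\mathrm{supp}(\rho)$.
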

Without loss of generality, one can restrict the optimisation to rank-one projective measurements (see Appendix \ref{app:rank1}). In what follows, we outline a proof that uses notions from state discrimination and the resource theory of coherence, with full details in Appendix \ref{app:proof1}. An alternative proof using properties of the min-entropy and semidefinite programming is provided in Appendix \ref{app:proof2}. We prove the theorem by first proving the lower bound $\pgs(\rho)\geq\frac{1}{d}\,\left(\tr \sqrt{\rho}\right)^2$ (Lemma \ref{mainlemma1}) and then showing that there exist measurements that achieve that bound (Lemma \ref{lemma: measure}). 
\begin{lemma}\label{mainlemma1}
    The lower bound $\pgs(\rho)\geq\frac{1}{d}\,\left(\tr \sqrt{\rho}\right)^2$ holds for every state $\rho$.
\end{lemma}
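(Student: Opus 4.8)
The plan is to exploit that, for a fixed projective measurement $\mathcal{M}=\{M_i\}$, the quantity $P_{\textrm{guess}}(\rho,\mathcal{M})$ in Eq.~\eqref{eqn: sdp} is a \emph{maximisation} over sub-normalised decompositions $\{\rho_i\}$ of $\rho$. Consequently, any single feasible decomposition already lower bounds $P_{\textrm{guess}}(\rho,\mathcal{M})$; if the resulting bound is the same for every $\mathcal{M}$, it passes to a lower bound on the minimum $\pgs(\rho)$. Using the reduction to rank-one measurements (Appendix~\ref{app:rank1}), I may take $\mathcal{M}=\{\ketbra{\phi_i}\}_{i=1}^{d}$ for an arbitrary orthonormal basis $\{\ket{\phi_i}\}$, so the task reduces to exhibiting, for each such basis, one decomposition whose guessing probability is at least $\tfrac{1}{d}\big(\tr\sqrt{\rho}\big)^2$.

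First I would write $\rho=\sqrt{\rho}\,\id\,\sqrt{\rho}$ and insert $\id=\sum_i\ketbra{\phi_i}$, which singles out the candidate
\begin{equation}
    \rho_i=\sqrt{\rho}\,\ketbra{\phi_i}\,\sqrt{\rho}\,.
\end{equation}
Each $\rho_i=\big(\sqrt{\rho}\ket{\phi_i}\big)\big(\sqrt{\rho}\ket{\phi_i}\big)^{\dagger}\geq 0$ is positive semi-definite, and $\sum_i\rho_i=\sqrt{\rho}\,\big(\sum_i\ketbra{\phi_i}\big)\,\sqrt{\rho}=\rho$, so $\{\rho_i\}$ is feasible. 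This is the ``pretty good'' choice alluded to in the main text: Eve's sub-states are the images of the measurement projectors under $\sqrt{\rho}$.

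Evaluating the objective on this decomposition gives
\begin{equation}
    \sum_i\tr(M_i\rho_i)=\sum_i\big(\bra{\phi_i}\sqrt{\rho}\ket{\phi_i}\big)^2=\sum_i a_i^2\,,
\end{equation}
where $a_i:=\bra{\phi_i}\sqrt{\rho}\ket{\phi_i}\geq 0$ obey $\sum_i a_i=\tr\sqrt{\rho}$, a quantity independent of the basis. The Cauchy--Schwarz inequality $\big(\sum_i a_i\big)^2\leq d\sum_i a_i^2$ then gives $\sum_i a_i^2\geq\tfrac{1}{d}\big(\tr\sqrt{\rho}\big)^2$, so $P_{\textrm{guess}}(\rho,\mathcal{M})\geq\tfrac{1}{d}\big(\tr\sqrt{\rho}\big)^2$ for every rank-one projective measurement. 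Minimising over $\mathcal{M}$ yields $\pgs(\rho)\geq\tfrac{1}{d}\big(\tr\sqrt{\rho}\big)^2$.

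The one genuinely non-routine step is guessing the decomposition $\rho_i=\sqrt{\rho}\,M_i\,\sqrt{\rho}$; once it is available, feasibility and the Cauchy--Schwarz estimate are immediate, and the bound it produces is \emph{uniform} in $\mathcal{M}$, which is precisely what lets me pass from a fixed measurement to the minimum. As a sanity check I would note that equality in Cauchy--Schwarz forces all $a_i$ equal, i.e.\ $\bra{\phi_i}\sqrt{\rho}\ket{\phi_i}=\tfrac{1}{d}\tr\sqrt{\rho}$, which holds for bases unbiased to the eigenbasis of $\rho$; this anticipates the matching construction of Lemma~\ref{lemma: measure} but is not needed for the bound itself.
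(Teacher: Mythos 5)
Your proof is correct, but it takes a genuinely different route from the paper's. The paper proves Lemma~\ref{mainlemma1} by invoking \cite[Theorem 1, (iii)]{Coles_2012} to rewrite $P_{\textrm{guess}}(\rho,\mathcal{M})$ as $\max_{\sigma\in\mathcal{I}_{\mathcal{M}}}F(\rho,\sigma)$ and then substituting the feasible point $\sigma=\id/d$, for which $F(\rho,\id/d)=\tfrac{1}{d}(\tr\sqrt{\rho})^2$; the whole weight of the argument sits in the cited identification of the guessing probability with a fidelity maximisation (equivalently, the geometric coherence). You instead work directly with the primal SDP \eqref{eqn: sdp}: you exhibit the feasible decomposition $\rho_i=\sqrt{\rho}\,M_i\,\sqrt{\rho}$, evaluate the objective as $\sum_i\big(\bra{\phi_i}\sqrt{\rho}\ket{\phi_i}\big)^2$, and apply Cauchy--Schwarz. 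This is elementary and self-contained (no appeal to the Coles/Xiong--Wu machinery), it uses the same decomposition that the paper only later deploys in Appendix~\ref{app:proof2} to certify the primal value for optimal measurements, and its equality analysis transparently motivates condition \eqref{eqn: equality} --- though, as you rightly note, equality in your Cauchy--Schwarz step only shows when \emph{this particular} feasible point attains the bound, not that the condition is necessary for optimality of the measurement, so it does not substitute for Appendix~\ref{app: nec_cond}. Your appeal to the rank-one reduction is also sound: since coarse-graining can only increase $P_{\textrm{guess}}$, a uniform lower bound over rank-one projective measurements passes to the minimum over all of $\Pi$. What the paper's route buys in exchange is the explicit link to the resource theory of coherence, which it then reuses in the proof of Lemma~\ref{lemma: measure}.
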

\begin{proof} 
Using the fact that rank-one measurements are optimal for Alice, from \cite[Theorem 1, (iii)]{Coles_2012}, we find 
\begin{align}
 P_{\textrm{guess}}(\rho,\mathcal{M})=&\max_{\{\sigma\in\mathcal{I}_{\mathcal{M}}\}} F(\rho,\sigma)\,,\label{PG2}  
\end{align}
where $F$ is the Uhlmann fidelity and $\mathcal{I}_{\mathcal{M}}$ is the set of states that are diagonal in the measurement basis $\{\ket{m_i}\}$.  Notice that $\mathbb{1}/d\in \mathcal{I}_{\mathcal{M}}$ for all $\mathcal{M} \in \Pi$, so $P_{\textrm{guess}}(\rho,\mathcal{M})\geq F(\rho,\mathbb{1}/d)=\frac 1 d (\tr\sqrt{\rho})^2$ for all $\mathcal{M}$. Hence, $\pgs(\rho)$ cannot be smaller than $\frac 1 d (\tr\sqrt{\rho})^2$. 
\end{proof}

\begin{lemma}\label{lemma: measure}
A projective measurement $\mathcal{M}$ in the basis $\{\ket{m_i}\}$ achieves the bound $P_{\textnormal{guess}}(\rho,\mathcal{M})=\frac 1 d (\tr\sqrt{\rho})^2$ if and only if $\matrixel{m_i}{\sqrt{\rho}}{m_i}=\frac{1}{d}\tr\sqrt{\rho}$ for all $i=1,...,d$.
\end{lemma}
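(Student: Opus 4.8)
The plan is to reduce the statement to a first-order optimality condition for the maximisation in Eq.~\eqref{PG2}. Writing a state diagonal in the measurement basis as $\sigma(\vec p)=\sum_i p_i\ketbra{m_i}{m_i}$, with $p_i\geq 0$ and $\sum_i p_i=1$, the set $\mathcal{I}_{\mathcal{M}}$ is parametrised by the probability simplex, and $P_{\textrm{guess}}(\rho,\mathcal{M})=\max_{\vec p}g(\vec p)^2$ with $g(\vec p)=\tr\sqrt{\sqrt{\rho}\,\sigma(\vec p)\,\sqrt{\rho}}$ the root fidelity. The uniform point $p_i=1/d$ corresponds to $\id/d\in\mathcal{I}_{\mathcal{M}}$, and its value $F(\rho,\id/d)=\frac1d(\tr\sqrt{\rho})^2$ is exactly the target. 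Hence the measurement achieves the bound if and only if the uniform point maximises $g$ over the simplex, and the whole lemma becomes a characterisation of when $\vec p=(1/d,\dots,1/d)$ is such a maximiser.

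Next I would compute the gradient of $g$ at the uniform point. Using $\partial_{p_i}\big(\sqrt{\rho}\,\sigma\,\sqrt{\rho}\big)=\sqrt{\rho}\,\ketbra{m_i}{m_i}\,\sqrt{\rho}$ together with the standard identity $\partial_t\tr\sqrt{M}=\tfrac12\tr\!\big(M^{-1/2}\partial_t M\big)$, and evaluating where $M=\rho/d$ so that $M^{-1/2}=\sqrt{d}\,\rho^{-1/2}$, a short cyclic-trace manipulation gives $\partial_{p_i}g\big|_{\textrm{unif}}=\tfrac{\sqrt d}{2}\matrixel{m_i}{\sqrt{\rho}}{m_i}$. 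Since the feasible directions from an interior point of the simplex are exactly the vectors $\vec v$ with $\sum_i v_i=0$, the uniform point is stationary (all feasible directional derivatives vanish) precisely when the $\matrixel{m_i}{\sqrt{\rho}}{m_i}$ are independent of $i$; as they sum to $\tr\sqrt{\rho}$ over the $d$ basis vectors, this is equivalent to $\matrixel{m_i}{\sqrt{\rho}}{m_i}=\frac1d\tr\sqrt{\rho}$ for all $i$, which is the stated condition.

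To upgrade stationarity to global optimality I would invoke concavity of the root fidelity: $g$ is concave in $\sigma$, for instance via the variational formula $g(\sigma)=\tfrac12\min_{X>0}\big[\tr(X\rho)+\tr(X^{-1}\sigma)\big]$, an infimum of functions affine in $\sigma$. For a concave function on a convex set, an interior stationary point is a global maximiser, which gives the ``if'' direction: the stated condition implies $\max_{\vec p}g=g(\textrm{unif})$, hence $P_{\textrm{guess}}(\rho,\mathcal{M})=\frac1d(\tr\sqrt{\rho})^2$. The ``only if'' direction does not even need concavity: if the condition fails there are indices $i,j$ with $\matrixel{m_i}{\sqrt{\rho}}{m_i}>\matrixel{m_j}{\sqrt{\rho}}{m_j}$, and moving mass along $\vec v=e_i-e_j$ has strictly positive directional derivative, so $g$ can be strictly increased and $P_{\textrm{guess}}(\rho,\mathcal{M})>\frac1d(\tr\sqrt{\rho})^2$.

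I expect the main obstacle to be the technical care needed when $\rho$ is not full rank, where $M^{-1/2}$ must be read as a pseudoinverse and $\tr\sqrt{M}$ is a priori non-smooth. The key point to check is that every relevant direction $\dot M=\sqrt{\rho}\,\ketbra{m_i}{m_i}\,\sqrt{\rho}$ is supported on $\operatorname{supp}\rho$, so that the directional derivatives along feasible simplex directions still exist and are given by the same formula; the manipulation $\tr(\rho^{-1/2}\sqrt{\rho}\,\ketbra{m_i}{m_i}\,\sqrt{\rho})=\matrixel{m_i}{\sqrt{\rho}}{m_i}$ goes through verbatim with the pseudoinverse since $\sqrt{\rho}\,\rho^{-1/2}$ acts as the identity on $\operatorname{supp}\rho$. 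Establishing (or citing) concavity of the root fidelity and justifying differentiability at the uniform point are the only non-routine ingredients; the rest is the first-order simplex argument above.
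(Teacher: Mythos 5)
Your proof is correct, but it takes a genuinely different route from the paper's. You stay entirely inside the fidelity formulation \eqref{PG2}, parametrise $\mathcal{I}_{\mathcal{M}}$ by the probability simplex, and characterise when the uniform point maximises the root fidelity via a first-order stationarity condition (whose gradient computation correctly yields $\matrixel{m_i}{\sqrt{\rho}}{m_i}$, including the rank-deficient case, since all perturbations live on $\operatorname{supp}\rho$) upgraded to global optimality by concavity of the root fidelity. The paper instead passes through the state-discrimination picture: it uses the Hughston--Jozsa--Wootters classification \eqref{decompEve} to rewrite $P_{\textrm{guess}}$ as the optimal discrimination \eqref{PG5} of the subnormalised vectors $\sqrt{\rho}\ket{m_i}$, verifies the Holevo--Helstrom--Yuen optimality conditions for the guess $\ket{i}=\ket{m_i}$ under \eqref{eqn: equality} via Cauchy--Schwarz (Appendix \ref{app:thm2}), and proves necessity by the exact identity $\sum_i\abs{\matrixel{m_i}{\sqrt{\rho}}{m_i}}^2=\pgs(\rho)+\sum_i\varepsilon_i^2$ (Appendix \ref{app: nec_cond}); your necessity argument is the first-order shadow of that second-order computation. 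What the paper's route buys is structural information used later --- the explicit optimal decomposition for Eve and the connection to pretty-good measurements --- whereas your route is more self-contained within a single optimisation and avoids the discrimination machinery, at the price of importing concavity of the root fidelity (e.g.\ via the variational formula you cite) and of some care with differentiability of $\tr\sqrt{M}$ at singular $M$, both of which you handle adequately.
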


\begin{proof}
The details missing here are provided in Appendix \ref{app:proof11}. In the quantum side-information scenario, any $k$-outcome rank-one measurement, $\mathcal{M}$, by Alice steers $k$ pure states on Eve. To optimise her guess, Eve has to measure her system to optimally discriminate among these $k$ states. It is known \cite{eldar2003designing} that the best discrimination of a set of pure states is obtained with rank-one measurements: thus, Eve will also perform a rank-one measurement. In turn, Eve's measurement defines an ensemble realising the mixed state $\rho$. This ensemble
consists of $k$ pure states $\rho_i=\dyad{\tilde{\psi_i}}$. If $\mathcal{M}$ is projective, we have $k=d$ and it follows from \cite{hughston1993complete} that any decomposition of $\rho$ in $d$ pure states is defined by the choice of an orthonormal basis $\{\ket{i}\}$ through \ba\label{decompEve}
\tilde{\ket{\psi_i}}=\sqrt{\rho}\ket{i}&\textrm{ with }& \braket{i}{i'}=\delta_{ii'}\,,\;i,i'=1,...,d\,.
\ea
Inserting all these observations in \eqref{eqn: sdp}, 
we obtain
\begin{align}\label{PG1}
P_{\textrm{guess}}(\rho,\mathcal{M})=\max_{\{\ket{i}\}}  \sum_i \abs{\matrixel{m_i}{\sqrt{\rho}}{i}}^2 \,.
\end{align}
The r.h.s.~has been called the geometric coherence of $\rho$ \cite{streltsov2015measuring} and was shown in \cite{xiong2018geometric} to be equivalent to $\max_{\{\sigma\in\mathcal{I}_{\mathcal{M}}\}} F(\rho,\sigma)$, with $\mathcal{I}_{\mathcal{M}}$ the set of states diagonal in the basis $\{\ket{m_i}\}$.
If we rewrite \eqref{PG1} as
\begin{align}
P_{\textrm{guess}}(\rho,\mathcal{M})&=&\max_{\{\Pi_i\}_i}\sum_i \tr\big(\Pi_i\dyad{\tilde{\gamma_i}}\big) \label{PG5}\\&\textrm{s.t.}& \Pi_i\geq 0\,,\;\sum_i \Pi_i=\id\,,\nonumber
\end{align} 
the r.h.s~defines the optimal discrimination of the sub-normalised states $\ket{\tilde{\gamma_i}}:=\sqrt{\rho}\ket{m_i}$ with a projective measurement $\Pi_i=\dyad{i}$. One then checks (see Appendix \ref{app:thm2}) that, under the assumption that
\ba\label{eqn: equality}
\matrixel{m_i}{\sqrt{\rho}}{m_i}=\frac{1}{d}\tr\sqrt{\rho}&\textrm{for all}& i=1,...,d\,,
\ea
the choice $\ket{i}=\ket{m_i}$ fulfills all the conditions for optimal discrimination of the $\ket{\tilde{\gamma_i}}$ \cite{holevo1973statistical,helstrom1969quantum,yuen1975optimum}. Thus, for measurements satisfying \eqref{eqn: equality}, it holds that $P_{\textrm{guess}}(\rho,\mathcal{M})=\sum_i \abs{\matrixel{m_i}{\sqrt{\rho}}{m_i}}^2=\frac{1}{d}(\tr\sqrt{\rho})^2$. Furthermore, we prove (see Appendix \ref{app: nec_cond}) that the condition \eqref{eqn: equality} is also necessary for a projective measurement to achieve the optimal guessing probability. What remains to be proven is that there exist measurements satisfying condition \eqref{eqn: equality}. An example of such a measurement valid for any state is the one defined by a basis $\{\ket{m_i}\}$ that is unbiased to the eigenbasis of $\rho$, that is, all moduli of inner products between elements of the two different bases equal $\frac{1}{\sqrt{d}}$.
However, as we discuss in Section \ref{sec: case_studies}, one can find other measurements satisfying condition \eqref{eqn: equality}
when $d>2$.
\end{proof}

Notice that, when Alice uses measurements satisfying \eqref{eqn: equality}, the decomposition \eqref{decompEve} that is optimal for Eve is $\tilde{\ket{\psi}_i}=\sqrt{\rho}\ket{m_i}$. If $\rho$ is full rank, $M_i=\rho^{-1/2}\rho_i\rho^{-1/2}$ is the `pretty good measurement' \cite{hausladen1994pretty} for the ensemble $\{q_i, \rho_i/q_i\}$ steered by Eve. This measurement is known to be optimal when special symmetries like \eqref{eqn: equality} are present in the problem \cite{dalla2015optimality} (in the notation of that work, the Gram matrix has entries $G_{ij}=\matrixel{m_i}{\rho}{m_j}$). Moreover, when Alice's measurement satisfies \eqref{eqn: equality} and when $\rho$ is full-rank, we can show (see Appendix \ref{app: Eve_discrim}) that Eve's optimal measurement to discriminate her local states is also a `pretty good' measurement.

After solving the problem for the guessing probability, we now move to the von Neumann entropy of the measurement outcomes conditioned on Eve's side information, a quantity of relevance in the multi-round setting~\cite{tomamichel2009fully,dupuis2020entropy,dai2023intrinsic}.

\begin{theorem}\label{thm: H}
The maximal conditional entropy that can be extracted from a quantum state $\rho$ using a projective measurement is $H^*=\log_2 d-S(\rho)$, where $S(\rho)=-\tr \rho \log_2{\rho}$ is the von Neumann entropy.
\end{theorem}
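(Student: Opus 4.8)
The plan is to compute the conditional von Neumann entropy $H(X|E)$ directly in the quantum side-information picture, where $X$ is Alice's outcome register and Eve holds a purification $\ket{\Psi}_{AE}$ of $\rho$. First I would restrict to rank-one projective measurements $\mathcal{M}=\{\ketbra{m_i}\}_{i=1}^d$ and write the post-measurement classical--quantum state as $\rho_{XE}=\sum_i p_i \ketbra{i}_X\otimes \rho_E^i$, with $p_i=\matrixel{m_i}{\rho}{m_i}$ and $\rho_E^i$ Eve's steered state conditioned on outcome $i$. The key observation is that, because $\ket{\Psi}_{AE}$ is pure and $M_i$ is rank one, conditioning on outcome $i$ leaves the global state in product form $\ket{m_i}_A\otimes\ket{\phi_i}_E$, so each $\rho_E^i$ is pure and $S(\rho_E^i)=0$.

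Next I would evaluate $H(X|E)=H(XE)-H(E)$. Using $H(E)=S(\rho_E)=S(\rho)$ (equal spectra of a pure bipartite state) and the block structure of $\rho_{XE}$, which gives $H(XE)=H(\{p_i\})+\sum_i p_i S(\rho_E^i)=H(\{p_i\})$, I obtain the clean identity $H(X|E)=H(\{p_i\})-S(\rho)$; equivalently, the Holevo quantity of Eve's pure-state ensemble is $I(X:E)=S(\rho_E)-\sum_i p_i S(\rho_E^i)=S(\rho)$. Maximising the Shannon entropy $H(\{p_i\})$ of a $d$-outcome distribution yields $\log_2 d$, attained exactly when $p_i=\tfrac1d$ for all $i$; any basis $\{\ket{m_i}\}$ unbiased to the eigenbasis of $\rho$ realises this, since then $\matrixel{m_i}{\rho}{m_i}=\tfrac1d\tr\rho=\tfrac1d$. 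This gives $H^*=\log_2 d - S(\rho)$, and shows the same unbiased measurements that are optimal for Theorem~\ref{thm: main} are optimal here as well.

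The step I expect to be the main obstacle is justifying that rank-one measurements are optimal, i.e.\ that no coarser projective measurement beats this value. For a general projective $\{M_i\}$ with $M_i$ of rank $r_i$, Eve's conditional states $\rho_E^i$ need not be pure, and $H(X|E)=H(\{p_i\})+\sum_i p_i S(\rho_E^i)-S(\rho)$. I would compare such a measurement with its rank-one refinement inside each block $\mathrm{range}(M_i)$, giving a finer outcome $(i,a)$ with probabilities $p_{i,a}$. Grouping the Shannon entropy produces an extra term $\sum_i p_i H(\{p_{i,a}/p_i\}_a)$, while the fact that measuring a subsystem in any basis yields a distribution whose Shannon entropy is at least the von Neumann entropy of that subsystem's reduced state gives $H(\{p_{i,a}/p_i\}_a)\ge S(\rho_A^i)=S(\rho_E^i)$ (using purity of the block state $(M_i\otimes\id)\ket{\Psi}$ to equate the two marginal entropies). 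Hence refinement never decreases $H(X|E)$, so the supremum is attained on rank-one measurements and the bound $\log_2 d - S(\rho)$ is tight.
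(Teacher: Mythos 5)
Your proof is correct, but it follows a genuinely different route from the paper's. The paper invokes the identity $H(Z|E)=D\big(\rho\,\|\sum_z M_z\rho M_z\big)$ from Coles (their Ref.~[13]), rewrites it as $S\big(\sum_z M_z\rho M_z\big)-S(\rho)$ (the relative entropy of coherence), and then bounds the pinched state's entropy by $\log_2 d$. You instead compute $H(X|E)=H(XE)-H(E)$ directly from the purification, using the block decomposition $H(XE)=H(\{p_i\})+\sum_i p_i S(\rho_E^i)$ and the purity of Eve's steered states for rank-one measurements to land on $H(X|E)=H(\{p_i\})-S(\rho)$; this is the same quantity, since for orthonormal $\{\ket{m_i}\}$ one has $S\big(\sum_i M_i\rho M_i\big)=H(\{p_i\})$, but your derivation is self-contained and avoids the citation. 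The equality condition $p_i=\matrixel{m_i}{\rho}{m_i}=1/d$ you obtain is exactly the paper's condition \eqref{eqn: H equality}. Your rank-one reduction is also sound and ultimately rests on the same fact as the paper's Appendix~\ref{app:rank1_H} argument (a projective measurement cannot decrease von Neumann entropy), packaged as the inequality $H(\{p_{i,a}/p_i\}_a)\geq S(\rho_A^i)=S(\rho_E^i)$ together with the grouping property of Shannon entropy; the paper phrases it as $S(\rho_{\textrm{fine}})\geq S(\rho_{\textrm{coarse}})$. What your approach buys is elementarity and transparency about where each term comes from; what the paper's buys is a direct link to the coherence literature and a formula valid uniformly for all projective (not only rank-one) measurements from the outset.
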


\begin{proof}
From \cite[Theorem 1, (i)]{Coles_2012}, we have that the entropy $H(Z|E)$ of Alice's measurement outcomes $Z$ conditioned on Eve's side information $E$ is
\begin{align}
    H(Z|E)= D\big(\rho\,||\sum_{z}M_z \rho M_z\big)\,,
\end{align}
where $\{M_z\}_z$ is Alice's projective measurement and $D(\rho||\sigma)$ is the quantum relative entropy between the states $\rho$ and $\sigma$,
\begin{align}
    D(\rho||\sigma)= \tr \Big( \rho \big(\log_2{\rho}-\log_2{\sigma} \big)\Big)\,,
\end{align}
which is defined when the support of $\rho$ is contained within the support of $\sigma$. In Appendix \ref{app:rank1_H}, we show that: 1) a rank-one measurement is optimal for Alice to maximise $H(Z|E)$ for a given $\rho$, and 2) that 
\begin{align}\label{eqn: S_ents}
 D\big(\rho\,||\sum_{z}M_z \rho M_z\big)= S\big(\sum_{z}M_z \rho M_z\big)-S(\rho)\,.   
\end{align}
In \cite{yuan2019quantum}, the r.h.s. is shown to be equivalent to the relative entropy of coherence of $\rho$ with respect to the measurement basis, which is used as a quantifier of randomness. 

\medskip

The maximum von Neumann entropy of a state of dimension $d$ is $\log_2 d$ and is achieved only for maximally mixed states, so we can upper bound Eq. \eqref{eqn: S_ents} with 
\begin{align}
\label{eqn: vnbound}
  H(Z|E) \leq \log_2 d -S(\rho)\,,  
\end{align}
with equality reached if and only if Alice's measurement basis $\{\ket{m_z}\}_z$ leaves her system in the maximally mixed state, i.e. if the condition 
\ba\label{eqn: H equality}
\matrixel{m_z}{{\rho}}{m_z}=\frac{1}{d}&\textrm{for all}& z=1,...,d\,
\ea
is satisfied. 
\end{proof}
As in the case for the condition \eqref{eqn: equality} for $H^{*}_\textrm{min}$, suitable measurements satisfying \eqref{eqn: H equality} include bases $\{\ket{m_z}\}$ that are unbiased to the eigenbasis of $\rho$, implying the tightness of \eqref{eqn: vnbound}. However, when $d >2 $ we can find other suitable measurements, as discussed in Section \ref{sec: case_studies}. The quantity $\log_2 d -S(\rho)$ is defined in \cite{zurek2001information} as the total information of $\rho$ and it is used in \cite{Horodecki_2003} as a measure of the objective information of $\rho$.

We now consider the conditional max-entropy of the measurement outcomes conditioned on Eve's side information. This quantity has been interpreted as the security of Alice's measurement outcomes when used as a secret key \cite{Konig_2009}.

\begin{theorem}\label{thm: Hmax}
The maximal conditional max-entropy that can be extracted from a quantum state $\rho$ using a projective measurement is $H^*_{\textnormal{max}}=\log_2 d +\log_2\lmax(\rho)$, where $\lmax(\rho)$ is the largest eigenvalue of $\rho$.
\end{theorem}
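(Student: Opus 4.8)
The plan is to follow the template of Theorems~\ref{thm: main} and~\ref{thm: H}: reduce to rank-one projective measurements, recast the conditional max-entropy as an explicit optimisation over Eve's system, bound it, and then exhibit a saturating measurement. I would start from the fidelity form of the conditional max-entropy, $H_{\textnormal{max}}(Z|E)_\rho=\max_{\sigma_E}\log_2 F(\rho_{ZE},\id_Z\otimes\sigma_E)$, with $F(\rho,\sigma)=\big(\tr\sqrt{\sqrt{\rho}\,\sigma\sqrt{\rho}}\,\big)^{2}$ the same squared Uhlmann fidelity appearing in Lemma~\ref{mainlemma1} (this is the König--Renner--Schaffner definition underlying the secret-key interpretation cited just above the theorem). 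Here $\rho_{ZE}=\sum_z\dyad{z}_Z\otimes\dyad{\phi_z}_E$ is the classical-quantum state produced when Alice measures the rank-one projectors $\{\dyad{m_z}\}$ on her half of a purification $\ket{\psi}_{AE}$ of $\rho$, and $\ket{\phi_z}$ is the sub-normalised pure state steered on Eve, satisfying $\braket{\phi_z}{\phi_z}=\matrixel{m_z}{\rho}{m_z}$ and $\sum_z\dyad{\phi_z}=\rho_E$, where $\rho_E$ is Eve's marginal and shares the spectrum of $\rho$. Both facts are immediate from the Schmidt form of $\ket{\psi}_{AE}$.

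The central computation exploits that $\rho_{ZE}$ and $\id_Z\otimes\sigma_E$ are block-diagonal in the register $Z$. Since the trace norm of a block-diagonal operator is the sum of its blockwise trace norms, and each block $\sqrt{\dyad{\phi_z}}\sqrt{\sigma_E}$ is rank one, the fidelity collapses to
\begin{align}
F(\rho_{ZE},\id_Z\otimes\sigma_E)=\Big(\sum_{z}\sqrt{\matrixel{\phi_z}{\sigma_E}{\phi_z}}\,\Big)^{2}\,.
\end{align}
Setting $a_z=\matrixel{\phi_z}{\sigma_E}{\phi_z}\ge 0$, the upper bound follows from two elementary steps: Cauchy--Schwarz over the $d$ outcomes gives $\big(\sum_z\sqrt{a_z}\big)^{2}\le d\sum_z a_z$, and $\sum_z a_z=\tr(\sigma_E\rho_E)\le\lmax(\rho_E)=\lmax(\rho)$ holds for every density operator $\sigma_E$. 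Hence $H_{\textnormal{max}}(Z|E)\le\log_2 d+\log_2\lmax(\rho)$ for every rank-one projective measurement.

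For achievability I would take $\sigma_E=\dyad{v}$ with $\ket{v}$ the leading eigenvector of $\rho_E$, and a basis $\{\ket{m_z}\}$ unbiased to the eigenbasis of $\rho$. Propagating the definition of $\ket{\phi_z}$ through the purification gives $a_z=\lmax(\rho)\,\abs{\braket{m_z}{e_{\textnormal{max}}}}^{2}$, with $\ket{e_{\textnormal{max}}}$ the leading eigenvector of $\rho$; unbiasedness fixes $\abs{\braket{m_z}{e_{\textnormal{max}}}}^{2}=1/d$, so all $a_z=\lmax(\rho)/d$ coincide. Both inequalities above then saturate simultaneously, $F=d\,\lmax(\rho)$, and $H_{\textnormal{max}}(Z|E)=\log_2 d+\log_2\lmax(\rho)$, yielding $H^*_{\textnormal{max}}=\log_2 d+\log_2\lmax(\rho)$. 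Only unbiasedness with respect to $\ket{e_{\textnormal{max}}}$ is actually needed, which, exactly as for conditions~\eqref{eqn: equality} and~\eqref{eqn: H equality}, leaves room for further non-mutually-unbiased optimal measurements when $d>2$.

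The step I expect to demand the most care is the reduction to rank-one measurements. Refining a higher-rank projector into rank-one components is a classical relabelling of $Z$, so a coarse measurement and its refinement produce the same $ZE$ marginal data, and the claim that refining cannot decrease $H_{\textnormal{max}}(Z|E)$ is a data-processing statement. Concretely, for fixed $\sigma_E$ it reduces to showing that the blockwise sum defining $F$ can only grow under refinement, i.e.\ $\tr\sqrt{X+Y}\le\tr\sqrt{X}+\tr\sqrt{Y}$ for positive $X,Y$ (the Rotfel'd-type subadditivity of $\tr\sqrt{\,\cdot\,}$), applied to $X=\sqrt{\sigma_E}\,\tilde\sigma^{(1)}_E\sqrt{\sigma_E}$ and its partner. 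Granting this, the optimisation legitimately restricts to rank-one measurements with exactly $d$ outcomes and the argument closes; the remaining manipulations, namely the Schmidt-form identities and the purity of Eve's steered states, are routine.
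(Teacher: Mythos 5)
Your proposal is correct and follows essentially the same route as the paper: the König--Renner--Schaffner fidelity expression for $\hmax$ of the cq-state with pure steered states, Cauchy--Schwarz over the $d$ outcomes, the bound $\tr(\sigma\rho_E)\le\lmax(\rho_E)=\lmax(\rho)$, and saturation by a basis unbiased to the leading eigenvector. The only cosmetic differences are that you justify the rank-one reduction by blockwise Rotfel'd subadditivity of $\tr\sqrt{\,\cdot\,}$ where the paper invokes the general data-processing inequality for $\hmax$ under functions of the classical register, and you omit the paper's side discussion of optimal measurements when $\lmax(\rho)$ is degenerate, which is not needed for the stated value of $H^*_{\textnormal{max}}$.
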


\begin{proof}
The details missing here are given in Appendix \ref{app: Hmax_details}. Without loss of generality, we restrict Alice to performing rank-one projective measurements (see Appendix \ref{app:rank1}). In the case where Alice makes a rank-one projective measurement, the conditional max-entropy of her outcomes conditioned on Eve can be formulated \cite{Konig_2009} as
\begin{align}
    \hmax\left (A|E \right)= \log_2 \psecr\,,
\end{align}
where
\begin{align}\label{eqn: psecr_trace}
\psecr
&= \max_{\sigma} \left( \sum_{x}  \sqrt{{p_x}\tr\left( { \sigma \ketbra{\psi_{x}^{E}}{\psi_{x}^{E}}}\right) } \right)^2
\\
&\textrm{s.t.} \quad \sigma \geq 0\,,\;\tr \sigma =1 \,,  \end{align}
where $\{\ket{\psi_{x}^{E}}\}$ are Eve's post-measurement states and $p_x= \langle m_x|\rho |m_x \rangle$. By applying the Cauchy-Schwartz inequality and identifying the semidefinite optimisation problem for the maximum eigenvalue of a quantum state, we find
\begin{align}\label{eqn: max_psec}
 \psecr \leq d \lmax(\rho)\,.   
\end{align}
In the case where the largest eigenvalue of $\rho$ is unique, the bound \eqref{eqn: max_psec} is reached if and only if the condition 
\ba\label{eqn: Hmax equality}
{\lvert \langle m_x  | \umax\rangle   \rvert}^{2}=\frac{1}{d}&\textrm{for all}& x=1,...,d\,
\ea
is satisfied, where $\ket{\umax}$ is the eigenvector of $\rho$ corresponding to its largest eigenvalue. The optimal measurements in the case where the maximum eigenvalue of $\rho$ is degenerate are discussed in Appendix \ref{app: Hmax_details}.  
\end{proof}
As in the case of $H^{*}_\textrm{min}$ and $H^{*}$, suitable measurements satisfying \eqref{eqn: Hmax equality} include bases $\{\ket{m_x}\}$ that are unbiased to the eigenbasis of $\rho$, but, as before, when $d >2 $ we can find other suitable measurements, as discussed in Section \ref{sec: case_studies}.

\section{Two case studies}\label{sec: case_studies}
Let us now study measurements that satisfy \eqref{eqn: equality}, \eqref{eqn: H equality} or \eqref{eqn: Hmax equality} but which are not unbiased to the eigenbasis of $\rho$. For one qubit, it is quickly verified that all measurements that satisfy \eqref{eqn: equality} are unbiased, so our first case study is for \textit{one qutrit}. Consider $\rho=\sum_{i=1}^3\lambda_i \dyad{i}$ with $\lambda_1 \geq \lambda_2 \geq \lambda_3$, and the measurement basis $\{M_i=\dyad{m_i}\}_{i=1,2,3}$, with 
\begin{align}\label{eqn: qutrit_meas}
    \ket{m_1 }&=\sqrt{\frac{1+a}{3}}\ket{1}+\sqrt{\frac{1+b}{3}}\ket{2}+\sqrt{\frac{1+c}{3}}\ket{3}\,, \nonumber
    \\
    \ket{m_2 }&=\sqrt{\frac{1+a}{3}}\ e^{i\theta_1}\ket{1}+\sqrt{\frac{1+b}{3}}\ket{2}+\sqrt{\frac{1+c}{3}}\ e^{i\theta_2}\ket{3}\,,
\end{align}
and $\ket{m_3}$ defined by the normalization condition $\sum_i M_i=\id$, 
where $a = -({\gamma_2}-{\gamma_3})k,\, b = ({\gamma_1}-{\gamma_3})k,\, c = -({\gamma_1}-{\gamma_2})k,\,  k\in \mathbb{R}$ and each $\gamma_i \geq 0$ with $\gamma_1 \geq \gamma_2 \geq \gamma_3$. We show in Appendix \ref{app: qutrit} that suitable parameters $\theta_1$ and $\theta_2$ can always be chosen such that this is a valid rank-one projective measurement when $k$ is in the range $-\frac{1}{2}\leq k\leq \frac{1}{2}$. 

This measurement basis is not in general unbiased to the eigenbasis of $\rho$, except when $\rho$ is maximally mixed. When we set $\{\gamma_i\}=\{\sqrt{\lambda_i}\}$, it is straightforward to show that the condition  \eqref{eqn: H equality} for the measurement to maximise $H_{\textnormal{min}}$ is satisfied. Similarly, if we set $\{\gamma_i\}=\{\lambda_i\}$, we see that the condition \eqref{eqn: H equality} for maximal $H$ is satisfied. Finally, the condition \eqref{eqn: Hmax equality} for maximal $\hmax$ is satisfied when $\gamma_2=\gamma_3$, so we see that, for qutrits at least, there exist non-unbiased measurements that achieve maximal randomness for every $\rho$ for all three of our quantifiers of randomness. Interestingly, though, these three conditions are inequivalent in general, so one can choose parameters $\{\gamma_i\}$ such that the measurement maximises any one of the entropies but not the other two.

The second case study uses \textit{two qubits}. It is based on the observation (proved in Appendix \ref{app:prodbasis}) that there is no product basis unbiased to the basis 
\begin{equation}
    \begin{aligned}\label{eq:entbasistext}
    \ket{\psi_1}&=\ket{00}\\
    \ket{\psi_2}&=\frac{1}{\sqrt{3}}(\ket{01}+\ket{10}+\ket{11})\\
    \ket{\psi_3}&=\frac{1}{\sqrt{3}}(\ket{01}+\omega\ket{10}+\omega^2\ket{11})\\
    \ket{\psi_4}&=\frac{1}{\sqrt{3}}(\ket{01}+\omega^2\ket{10}+\omega\ket{11})\,,
    \end{aligned}
\end{equation} where $\omega=e^{i\,2\pi/3}$. Consider a state $\rho=\sum_{k=1}^4 \lambda_k \dyad{\psi_k}$ diagonal in this basis. To extract the maximal randomness with an unbiased measurement, one must be able to perform entangled measurements. This is not a conceptual problem in our setting, since there is no reason why the two qubits should be far apart; nonetheless, such measurements may be more challenging to perform than basic single-qubit measurements. The question is: can one extract maximal randomness from $\rho$ by using a product basis? The answer seems to be positive. While we do not have an analytical proof, for a large number of choices of $\lambda$, we performed a heuristic optimisation over product bases, both general ($\{\ket{a,b},\ket{a,b^\perp},\ket{a^\perp,c},\ket{a^\perp,c^\perp}\}$, with six free parameters) and restricted to proper product measurements ($\{\ket{a,b},\ket{a,b^\perp},\ket{a^\perp,b},\ket{a^\perp,b^\perp}\}$, with four free parameters). In both cases and for all states that we probed, we numerically found measurements satisfying $\sum_{i} \Big( \langle m_i | \sqrt{\rho}|m_i \rangle - \frac{\tr \sqrt{\rho}}{4} \Big)^{2} \leq 10^{-15}$, $\sum_{i} \Big( \langle m_i | {\rho}|m_i \rangle - \frac{1}{4} \Big)^{2} \leq 10^{-15}$ or $\sum_{i} \Big( \abs{\langle m_i | \umax \rangle}^2 - \frac{1}{4} \Big)^{2} \leq 10^{-15}$, which suggests that there exist product measurements satisfying the conditions \eqref{eqn: equality}, \eqref{eqn: H equality} and \eqref{eqn: Hmax equality}, respectively.
In this family of examples, therefore, the freedom to choose a measurement basis that is not unbiased may lead to a practical advantage: it allows one to obtain maximal randomness with product measurements.

\section{Conclusion}
It is well known that quantum physics contains an intrinsic form of randomness, but, somewhat surprisingly, given a quantum state, it is unknown what is the optimal measurement to extract from it the maximum amount of such randomness. In this work, we concentrate on three different quantifiers of the amount of randomness in a measurement's outcomes conditioned on an adversary's side information: the conditional min-entropy, the conditonal von Neumann entropy and the conditional max-entropy.
As one might have expected, all measurements in a basis that is unbiased to the eigenbasis of $\rho$ maximise all three of these conditional entropies. However, we also find other measurements that achieve the optimal values, providing a flexibility that may have practical implications, as in the second case study reported. In fact, beyond its fundamental motivation, our analysis is also relevant for the design of device-dependent QRNGs, for which the quantum state is fully characterised. Interestingly, we find measurements in the qutrit case that maximise one of the three conditional entropies considered, but which are not optimal for the other two.

\section*{Acknowledgments}
We thank Siddhartha Das for pointing out to us the use of Eq.~\eqref{eqn: H_star} in other contexts \cite{zurek2001information, Horodecki_2003}.
This work is supported by the National Research Foundation, Singapore and A*STAR under its CQT Bridging Grant, the Government of Spain (Severo Ochoa CEX2019-000910-S, Torres Quevedo PTQ2021-011870, TRANQI and European Union NextGenerationEU PRTR-C17.I1), Fundació Cellex, Fundació Mir-Puig, Generalitat de Catalunya (CERCA program), the European Union (QSNP, 101114043 and Quantera project Veriqtas),  the ERC AdG CERQUTE, the AXA Chair in Quantum Information Science, and the  European  Union’s  Horizon~2020  research and innovation programme under the Marie Sk\l{}odowska-Curie grant agreement No.~754510.


%

\appendix

\section{Entropy definitions}
\label{app:entropy_defs}
Since we will use them in more than one section of this Appendix, here we state the definitons of the von Neumann conditional entropy and the min- and max-entropies of bipartite states $\rho_{AE}$ \cite{renner2006security}.  
\begin{defn}\label{defn: H}
    The conditional entropy of $\rho_{AE}$ is defined by
\begin{align}
    H(A|E)_{\rho_{AE}} := S(\rho_{AE})- S(\rho_E)\,,
\end{align}
where $S(\rho)= -\tr \big( \rho \log_2 \rho \big)$ is the von Neumann entropy of $\rho$ and $\rho_{E}:= \tr_{A}{\rho_{AE}}$.
\end{defn}
\begin{defn}\label{defn: Hmin}
    The conditional min-entropy of $\rho_{AE}$ is defined by
    \begin{align}
    H_{\min}(A|E)_{\rho_{AE}} := \max_{\sigma_{E
}}\,\, H_{\min}(\rho_{AE}|\sigma_{E})\, 
\end{align}
with 
\begin{align}
H_{\min}(\rho_{AE}|\sigma_{E}):=- \min \{\lambda~|~2^\lambda (\id_{A}\otimes \sigma_{E}) \geq \rho_{AE}\}\,.
\end{align}
\end{defn}

\begin{defn}\label{defn: Hmax}
The conditional max-entropy of $\rho_{AE}$ is
\begin{align}\label{eq:relative-min-entropy}
    H_{\max}(A|E)_{\rho_{AE}}:= \max_{\sigma_{E
}}\,\,\log_{2} \tr \Big( (\id_{A}\otimes\sigma_{E}) \Pi_{AE} \Big)\,,
\end{align}
where $\Pi_{AE}$ is the projector onto the support of $\rho_{AE}$.     
\end{defn}
Notice that when the system $E$ is trivial (i.e. its Hilbert space is one-dimensional), 
$H_{\max}(A|E)_{\rho_{AE}}=H_{\max}(A)_{\rho_A}= \log_{2}\textrm{rank}(\rho_{A})$. In the following, when the state $\rho_{AE}$ to which we refer is clear from the context, we will drop the corresponding subscript in the notation for the conditional entropies.

\medskip

\par In our state discrimination scenario where Alice and Eve share a bipartite state $\rho_{AE}$, given any POVM $\mathcal{M}=\{M_x\}_x$ for Alice, one can define a classical-quantum state (cq-state) $\rho_{XE}$ to model the correlations between the measurement outcomes (classical information) and the corresponding post-measurement states $\rho_{E}^x$ on Eve's subsystem,
\begin{align}\label{eqn: cq-state}
 \rho_{XE}= \sum_x p_x\ketbra{x}{x}\otimes \rho_{E}^{x}\,,  
\end{align}
where $p_x=\tr[M_x\rho_A]$, $\rho_E^x=\tr_{A}[(M_x\otimes\id_E)\rho_{AE}]/p_x$ and $\{\ket{x}\}_x$ is some orthonormal basis representing Alice's outcomes. When the POVM $\mathcal{M}$ is extremal,
we have the following relation between $H_{\textrm{min}}(X|E)$ and Eve's optimal guessing probability given $\rho$ and $\mathcal{M}$ \cite{Konig_2009,Senno_2023}: 
\begin{align}\label{eqn: H and P}
H_{\textrm{min}}(X|E) = -\textrm{log}_{2}P_{\textrm{guess}}(\rho, \mathcal{M})\,. 
\end{align}

\section{Proof of optimality of rank-one measurement operators}
In this appendix we prove the optimality of rank-one measurements for $H^{*}_{\textrm{min}}$ and $H^{*}$. Similar results have been obtained for other information-theoretic quantities (see, e.g., \cite[Section II.I]{Modi_2012} in the context of quantum discord).

\subsection{Optimality of rank-one measurements for $H^{*}_{\textrm{min}}$ and $H^{*}_{\textnormal{max}}$}
\label{app:rank1}
We show that Alice's optimal measurement can be assumed to be rank-one. First, notice that any measurement can be obtained by coarse-graining a rank-one measurement, since, given some coarse-grained (i.e. not rank-one) measurement $\mathcal{M}_\textrm{coarse}=\{M_i\}_i$, we can represent each of its elements in its spectral decomposition as $M_i= \sum_{j}\lambda_{ij}\ketbra{f^i_j}$, with $\lambda_{ij} \geq 0$ for all $i$, $j$. $\mathcal{M}_\textrm{coarse}$ can then be seen as a coarse-graining of the fine-grained measurement $\mathcal{M}_{\textrm{fine}}=\{\ketbra{f^i_j}\}_{i,j}$. 
If we restrict Alice to performing some projective measurement $\mathcal{M}_{\textrm{coarse}}$, 
the corresponding $\mathcal{M}_{\textrm{fine}}$ will also be projective.
\par The coarse-graining of $\mathcal{M}_\textrm{coarse}$ can be seen as a deterministic post-processing of the outcomes of $\mathcal{M}_{\textrm{fine}}$. Consider the classical-quantum state \eqref{eqn: cq-state} formed by the classical information of Alice's measurement outcomes and Eve's quantum states in the case of $\mathcal{M}_{\textrm{fine}}$. The cq-state for any coarse-graining of $\mathcal{M}_{\textrm{fine}}$ can be found by applying a deterministic function $f(x)$ to the classical register $X$,
  \begin{align}\label{eqn: coarse cq-state}
 \rho_{XE}^{\textrm{coarse}}= \sum_{x}p_x\ketbra{f(x)}\otimes \rho_{E}^{x}\,.  
\end{align}  
Applying a function to a classical register cannot increase either the conditional min-entropy or the conditional max-entropy (see, e.g., \cite[Proposition 6.20]{Tomamichel_2016}). Therefore,
\begin{align}
H_{\textrm{min}}(X|E)_{\rho_{XE}^{\textrm{coarse}}} &\leq H_{\textrm{min}}(X|E)_ {\rho_{XE}^{\textrm{fine}}}\,,
\\
H_{\textrm{max}}(X|E)_{\rho_{XE}^{\textrm{coarse}}} &\leq H_{\textrm{max}}(X|E)_ {\rho_{XE}^{\textrm{fine}}}\,.
\end{align}
Then, by \eqref{eqn: H and P}, it is optimal for Alice to choose a rank-one measurement in order to minimise Eve's guessing probability and to maximise the conditional max-entropy of her measurement outcomes.

\subsection{Optimality of rank-one measurements for $H^{*}$}
\label{app:rank1_H}
We know from \cite[Theorem 1, (i)]{Coles_2012} that the conditional entropy of the classical-quantum state formed by Alice's outcomes $Z$ and Eve's post-measurement states is 
\begin{align}
    H(Z|E)= D\big(\rho\,||\sum_{i}M_i \rho M_i\big)\,,
\end{align}
where $\{M_i\}_i$ is Alice's projective measurement and $D(\rho||\sigma)$ is the quantum relative entropy between the states $\rho$ and $\sigma$,
\begin{align}
    D(\rho||\sigma)= \tr \Big( \rho \big(\log_2{\rho}-\log_2{\sigma} \big)\Big)\,,
\end{align}
and is defined when the support of $\rho$ is contained within the support of $\sigma$. $D\big(\rho\,||\sum_{i}M_i \rho M_i\big)$ can be written as 
\begin{align}\label{eqn: D}
   D\big(\rho\,||\sum_{i}M_i \rho M_i\big)= -S(\rho) + \tr \Big(\rho \log_2 \big( \sum_{i}M_i \rho M_i\big)\Big)\,. 
\end{align}
Since $\{M_i\}_i$ is projective, we have $\log_2 \big(\sum_{i}M_i \rho M_i \big) = \sum_i \log_2 \big(M_i \rho M_i\big)$ and $\Big(\log_2\big({M_i \rho M_i}\big)\Big)M_j= \delta_{ij}\log_2\big({M_i \rho M_i}\big)$, so the second term on the r.h.s. of \eqref{eqn: D} is
\small
\begin{align}
 \tr \Big(\rho \sum_{i} \log_2 \big(M_i \rho M_i\big)\Big) &=  \tr \Big(\sum_j M_j \rho M_j \sum_{i} \log_2 \big(M_i \rho M_i \big)\Big) \nonumber
 \\
 &=  -S\big(\sum_{i}M_i \rho M_i\big)\,,
\end{align}
\normalsize
and we recover Equation \eqref{eqn: S_ents}. To show that it is optimal for Alice to perform a rank-one measurement $\{M^{\textrm{fine}}_{ij}=\ketbra{f^i_j}\}_{i,j}$ rather than a coarse-grained one $\{M^{\textrm{coarse}}_i= \sum_{j}\lambda_{ij}\ketbra{f^i_j}\}_i$, denoting $\rho_{\textrm{fine}}=\sum_{i, j}M_{ij}\rho M_{ij}$ and $\rho_{\textrm{coarse}}=\sum_{i}M^{\textrm{coarse}}_i \rho M^{\textrm{coarse}}_i$, it is sufficient to show that 
\begin{align}\label{eqn: S_ineq}
    S(\rho_{\textrm{fine}}) \geq S(\rho_{\textrm{coarse}})\,.
\end{align}
Note that the state $\rho^{\textrm{fine}}$ is the average state after performing the measurement $\{M^{\textrm{fine}}_{ij}\}_{i,j}$ on $\rho_{\textrm{coarse}}$.
Projective measurements cannot increase the von Neumann of a state (see, e.g., \cite[Theorem 11.9]{nielsen00}), so the inequality \eqref{eqn: S_ineq} holds and it is optimal for Alice to perform a rank-one measurement to maximise the conditional entropy.

\section{Technical steps in the proof of Theorem \ref{thm: main}}
\label{app:proof1}

\subsection{Proof of Lemma \ref{lemma: measure} sketched in the main text}
\label{app:proof11}

The main steps of the proof of Lemma \ref{lemma: measure} sketched in the text are given here in the form of two lemmas.   

\begin{lemma}\label{lemma1}
Since $\mathcal{M}$ is rank-one, the $d$ states $\rho_i$ can be taken pure, $\rho_i=q_i\dyad{\psi_i} := |\tilde{\psi}_i\rangle\langle \tilde{\psi}_i|$. Hence,
\ba\label{eqn: PG0app}
P_{\textnormal{guess}}(\rho,\mathcal{M})=\max_{\{\ket{\tilde{\psi}_i}\}}\sum_{i}|\langle{m_i}|{\tilde{\psi_i}}\rangle|^2\,,
\ea 
with the constraint that $\rho = \sum_{i=1}^ d |\tilde{\psi}_i\rangle\langle \tilde{\psi}_i|$.
\end{lemma}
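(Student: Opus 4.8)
The plan is to recast the optimisation over sub-normalised decompositions $\{\rho_i\}$ as a measurement that Eve performs on a purifying system, and then exploit that the states Eve must distinguish are pure. Let $\ket{\Psi}_{AE}$ be a purification of $\rho$, with $E$ held by Eve. By the Hughston--Jozsa--Wootters theorem \cite{hughston1993complete}, the feasible decompositions $\rho=\sum_i\rho_i$, $\rho_i\geq0$, are in one-to-one correspondence with POVMs $\{E_i\}_i$ on $E$ via $\rho_i=\tr_E\big[(\id_A\otimes E_i^{T})\dyad{\Psi}\big]$ (the transpose being taken in the Schmidt basis), and under this identification the objective $\sum_i\tr(M_i\rho_i)=\sum_i\matrixel{m_i}{\rho_i}{m_i}$ equals Eve's average probability of correctly guessing Alice's outcome.

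First I would use that $\mathcal{M}=\{\dyad{m_i}\}_{i=1}^{d}$ is rank-one and projective, so it has exactly $d$ outcomes and, conditioned on outcome $i$, steers on $E$ the pure state $\ket{\psi_i^{E}}\propto(\bra{m_i}\otimes\id_E)\ket{\Psi}$ with prior weight $p_i=\matrixel{m_i}{\rho}{m_i}$. Maximising $\pg(\rho,\mathcal{M})$ over decompositions is therefore the same as Eve choosing the POVM that optimally discriminates, in the minimum-error sense, the $d$ pure states $\{\ket{\psi_i^{E}}\}_{i=1}^{d}$. I would then invoke the known fact \cite{eldar2003designing} that optimal minimum-error discrimination of a set of pure states is always attained by a measurement whose POVM elements are rank-one; padding with zero operators if necessary, Eve may thus be taken to use a rank-one POVM with $d$ elements $E_i=\dyad{\mu_i}$.

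Finally I would transport this rank-one structure back through the correspondence: when $E_i=\dyad{\mu_i}$, the conditional operator $\rho_i=\tr_E\big[(\id_A\otimes E_i^{T})\dyad{\Psi}\big]$ is itself rank-one, $\rho_i=\dyad{\tilde\psi_i}$ with $\ket{\tilde\psi_i}=(\id_A\otimes\bra{\mu_i^{*}})\ket{\Psi}$, while $\sum_iE_i=\id$ guarantees $\sum_i\dyad{\tilde\psi_i}=\rho$. Substituting $\rho_i=\dyad{\tilde\psi_i}$ into the objective gives $\sum_i\matrixel{m_i}{\rho_i}{m_i}=\sum_i|\braket{m_i}{\tilde\psi_i}|^{2}$, which is exactly Eq.~\eqref{eqn: PG0app}.

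The step I expect to be the main obstacle is justifying the two appearances of ``rank-one'' together with the claim that exactly $d$ pure states suffice: one must be sure that the optimal discrimination measurement for pure states can genuinely be taken rank-one with $d$ outcomes (neither a finer nor a coarser POVM doing strictly better), and that the steering map indeed sends rank-one POVM elements to rank-one conditional states. A secondary point requiring care is the case $\operatorname{rank}(\rho)<d$, where the purifying system is smaller than $d$ and some steered vectors $\ket{\tilde\psi_i}$ may vanish; this is harmless, since a zero operator is trivially of rank at most one and affects neither the constraint nor the objective.
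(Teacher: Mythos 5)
Your proposal is correct and follows essentially the same route as the paper: pass to a purification, identify the optimisation over decompositions of $\rho$ with Eve's minimum-error discrimination of the $d$ pure states steered by Alice's rank-one measurement, and invoke the result of Eldar et al.\ that optimal discrimination of pure states is achieved with rank-one POVM elements, which pull back to rank-one (pure, sub-normalised) conditional states $\rho_i$. The only cosmetic difference is that you make the Hughston--Jozsa--Wootters correspondence between decompositions and POVMs on $E$ explicit, whereas the paper phrases the same equivalence via a no-signalling argument about who measures first.
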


\begin{proof}
We consider that Alice and Eve share a pure state $\ket{\Phi}_{AE}$ such that $\textrm{Tr}_E\dyad{\Phi}=\rho_A :=\rho$, and Eve performs a measurement $\mathcal{N}=\{N_i\}_i$. If assume that Eve performs her measurement before Alice, we find that she steers Alice's state $\rho_i$ with probability $p_i=\matrixel{\Phi}{\mathbb{1}\otimes N_i}{\Phi}$ and we recover \eqref{eqn: sdp}. But, because this is a no-signalling scenario, the order of their measurements does not matter, so we could equally think of Alice as measuring first. When Alice measures outcome $i$, since every $M_i$ is rank-one, she steers the pure state $\ket{\phi_i}\propto \tr_A(M_i\otimes\mathbb{1}\ket{\Phi})$. Thus
\ba\label{eqn: sdp2}
P_{\textrm{guess}}(\rho,\mathcal{M})&=&\max_{\mathcal{N}}\sum_i\textrm{Tr}(\rho M_i)\matrixel{\phi_i}{N_i}{\phi_i} \\
&\textrm{s.t.}& N_i\geq 0\,,\;\sum_i N_i=\mathbb{1}\,. \nonumber
\ea 
The SDP \eqref{eqn: sdp2} describes the optimal discrimination by Eve of an ensemble compatible with her reduced state. It was shown in \cite{eldar2003designing} that the optimal measurement to distinguish an ensemble of pure states is made of $d$ rank-one operators. Thus, Eve will also be performing a rank-one measurement on her system, and the states $\rho_i$ steered on Alice's side will be pure.\end{proof}

\begin{lemma}\label{lemma2}
We will show that one can always write $\tilde{\ket{\psi}_i}=\sqrt{\rho}\ket{i}$, with $\{\ket{i}\}$ an orthonormal basis [Eq.~\eqref{decompEve} of the main text], and that Eq.~\eqref{PG1} follows as a consequence, namely
\ba\label{PG1app}
    P_{\textnormal{guess}}(\rho,\mathcal{M})&=&\max_{\{\ket{i}\}}  \sum_i \abs{\matrixel{m_i}{\sqrt{\rho}}{i}}^2\,.
\ea 
\end{lemma}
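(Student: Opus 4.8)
The plan is to invoke the classification of ensembles of Hughston, Jozsa and Wootters \cite{hughston1993complete}, which the main text already cites for this step. The key preliminary observation is that $\sqrt{\rho}$ is Hermitian and positive semidefinite, so for \emph{any} orthonormal basis $\{\ket{e_k}\}_{k=1}^d$ of the Hilbert space one has the canonical decomposition $\rho = \sqrt{\rho}\,\id\,\sqrt{\rho} = \sum_k \big(\sqrt{\rho}\ket{e_k}\big)\big(\bra{e_k}\sqrt{\rho}\big)$, i.e.\ a decomposition of $\rho$ into the $d$ subnormalized pure states $\sqrt{\rho}\ket{e_k}$. Taking $\{\ket{e_k}\}$ to be the eigenbasis of $\rho$ recovers the usual spectral decomposition, but any basis will serve as the reference.

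First I would recall the HJW statement in the form needed here: any two decompositions of a fixed $\rho$ into $d$ subnormalized pure states are related by a $d\times d$ unitary acting on the decomposition index. Applying this to an arbitrary decomposition $\rho = \sum_i \dyad{\tilde{\psi}_i}$ appearing in \eqref{eqn: PG0app} and to the canonical one above, there is a unitary $U=(U_{ik})$ with $\ket{\tilde{\psi}_i} = \sum_k U_{ik}\,\sqrt{\rho}\ket{e_k}$. Pulling the index-independent operator $\sqrt{\rho}$ out of the sum gives $\ket{\tilde{\psi}_i} = \sqrt{\rho}\big(\sum_k U_{ik}\ket{e_k}\big) =: \sqrt{\rho}\ket{i}$, and since $U$ is unitary the vectors $\ket{i} = \sum_k U_{ik}\ket{e_k}$ form an orthonormal basis. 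This is precisely Eq.\ \eqref{decompEve}. Conversely, every orthonormal basis $\{\ket{i}\}$ produces a valid decomposition via $\sqrt{\rho}\ket{i}$, so the constrained maximization over decompositions $\{\ket{\tilde{\psi}_i}\}$ in \eqref{eqn: PG0app} is exactly a maximization over orthonormal bases $\{\ket{i}\}$. The substitution into \eqref{eqn: PG0app} is then immediate: because $\sqrt{\rho}$ is Hermitian, $\braket{m_i}{\tilde{\psi}_i} = \matrixel{m_i}{\sqrt{\rho}}{i}$ term by term, so $\abs{\braket{m_i}{\tilde{\psi}_i}}^2 = \abs{\matrixel{m_i}{\sqrt{\rho}}{i}}^2$, and replacing the maximization domain as above yields \eqref{PG1app}.

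The one point requiring care, and the main obstacle, is the rank-deficient case. If $\rho$ is not full rank then $\sqrt{\rho}$ has a nontrivial kernel, the canonical decomposition contains zero vectors $\sqrt{\rho}\ket{e_k}=0$, and the HJW relation only pins down the $\ket{i}$ up to their components inside $\ker\sqrt{\rho}$. I would dispose of this by noting that exactly those components are annihilated by $\sqrt{\rho}$ and hence leave $\sqrt{\rho}\ket{i}$ unchanged, so one is always free to fix them so as to complete $\{\ket{i}\}$ to a genuine orthonormal basis of the whole space. With that completion the representation $\ket{\tilde{\psi}_i}=\sqrt{\rho}\ket{i}$ holds with $\{\ket{i}\}$ orthonormal for every $\rho$, and the passage to \eqref{PG1app} goes through unchanged.
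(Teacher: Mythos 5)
Your proof is correct and rests on the same key ingredient as the paper's, namely the Hughston--Jozsa--Wootters classification of pure-state decompositions cited in \cite{hughston1993complete}; the two arguments differ only in which corollary of that result is invoked. The paper uses the one-to-one map $\rho'^{1/2}\rho^{-1/2}$ between decompositions of two \emph{different} density matrices and specialises to $\rho'=\id/d$, so that the orthonormal basis $\{\ket{i}\}$ emerges from the fact that any $d$-element rank-one decomposition of $\id$ is an orthonormal basis; as written, this phrasing presupposes $\rho$ invertible (or a tacit restriction to its support). You instead compare the arbitrary decomposition directly with the canonical one $\rho=\sum_k\sqrt{\rho}\,\dyad{e_k}\,\sqrt{\rho}$ via the unitary-freedom form of HJW, pull the index-independent operator $\sqrt{\rho}$ through the unitary, and read off $\ket{i}=\sum_k U_{ik}\ket{e_k}$ as an orthonormal basis. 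What this buys is a cleaner treatment of rank-deficient $\rho$: no inverse square root ever appears, the orthonormality of $\{\ket{i}\}$ is automatic from the unitarity of $U$, and the residual ambiguity you flag lives entirely in $\ker\sqrt{\rho}$, where it is harmless (indeed, with your formulation no completion step is even needed, since the $\ket{i}$ produced by $U$ already form a full orthonormal basis). The converse inclusion --- every orthonormal basis yields a valid decomposition since $\sqrt{\rho}\,\id\,\sqrt{\rho}=\rho$ --- and the term-by-term substitution into \eqref{eqn: PG0app} to obtain \eqref{PG1app} are handled identically in both proofs.
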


\begin{proof}

Following \cite{hughston1993complete}, given two density matrices $\rho'$ and $\rho$, there is a one-to-one map
\ba
\tilde{\ket{\psi_i'}}=\rho'^{1/2}\rho^{-1/2}\tilde{\ket{\psi_i}}\,, 
 \ i=1,...,n
\ea
between any two sub-normalised decompositions $\{\tilde{\ket{\psi_i'}}\}$ and $\{\tilde{\ket{\psi_i}}\}$ into the same number $n$ of pure states. Choosing $\rho'=\mathbb{1}/d$, we obtain $\tilde{\ket{\psi_i}}=\rho^{1/2}\tilde{\ket{i}}$, where $\sum_i \dyad{\tilde{i}}=\mathbb{1}$. Since we start with a decomposition of $\rho$ into $n=d$ pure states, and since any decomposition of $\mathbb{1}$ into $d$ pure states defines an orthonormal basis,
we find that
\begin{equation}\label{decomp}
    \tilde{\ket{\psi_i}}=\sqrt{\rho}\ket{i}
\end{equation} for some $\{\ket{i}\}$ forming an orthonormal basis. Plugging this into \eqref{eqn: PG0app}, we prove \eqref{PG1app}.
\end{proof}

\subsection{Optimisation \eqref{PG5} in the proof of Lemma \ref{lemma: measure}}
\label{app:thm2}
The optimisation \eqref{PG5} is a special case of the following: given two orthonormal bases $\{\ket{i}\}_{i=1,...,d}$ and $\{\ket{m_i}\}_{i=1,...,d}$ and an operator $A$ ($\sqrt{\rho}$ in our case) satisfying $A=A^\dagger\geq 0$ such that
\ba\label{constraintA}
\matrixel{m_i}{A}{m_i}=\frac{1}{d}\tr(A)\;\textrm{for all } i=1,...,d\,,
\ea
we want to compute
\ba\label{optim}
P&=&\max_{\{\ket{i}\}}\sum_i |\matrixel{i}{A}{m_i}|^2\,.
\ea 
This expression is equivalent to the probability of correct state discrimination of an ensemble $\sigma_i=\dyad{\tilde{\gamma_i}}$, with $\ket{\tilde{\gamma}_i}=A\ket{m_i}$, using the measurement $\{\Pi_i\}_{i=1,...,d}$.
A measurement $\{\Pi_i\}_{i=1,...,d}$ is optimal to distinguish a set of states $\{\sigma_i\}_{i=1,...d}$ if and only if $Y \equiv \sum_i \sigma_i\Pi_i\geq \sigma_j$ for all $j=1,...,d$ \cite{holevo1973statistical,helstrom1969quantum,yuen1975optimum}. In our case, $\{\ket{i}\}$ achieves the maximisation \eqref{optim} if and only if
\begin{align}\label{condthm2}
\sum_i A\dyad{m_i}A\dyad{i}-A\dyad{m_j}A\geq 0&\,\,\textrm{for all $j$}\,.
\end{align}
Let us make the \textit{guess} that, under the condition \eqref{constraintA}, the optimal measurement is given by $\ket{i}=\ket{m_i}$. With this, conditions \eqref{condthm2} read
\ba\label{condthm2_hold}
\frac{\tr(A)}{d}\,A\,-\,A\dyad{m_j}A := B_j\geq 0&\textrm{for all $j$}\,.
\ea It is clear that $B_j=B_j^\dagger$. Moreover, using \eqref{constraintA} we see that
\ban
\matrixel{\phi}{B_j}{\phi}&=&\matrixel{m_j}{A}{m_j}\matrixel{\phi}{A}{\phi}-\matrixel{\phi}{A}{m_j}\matrixel{m_j}{A}{\phi}
\ean for any vector $\ket{\phi}$. The r.h.s.~is always non-negative when $A=A^\dagger\geq 0$, because of the Cauchy-Schartz inequality applied to the vectors $\sqrt{A}\ket{\phi}$ and $\sqrt{A}\ket{m_j}$. This proves that $B_j\geq 0$ for all $j$, which is the desired condition, vindicating our guess. 

\subsection{Necessary condition for $P^{*}_{\textnormal{guess}}$}\label{app: nec_cond}
Here we show that satisfying condition~\eqref{eqn: equality} is  
necessary 
for Alice's measurement basis $\{\ket{m_i}\}$ to achieve the optimal guessing probability.
From \eqref{PG1}, choosing the orthonormal basis $\{\ket{i}\}=\{\ket{m_i}\}$, we have the bound 
\begin{align}\label{eqn: ep_bound}
P_{\textrm{guess}}(\rho,\mathcal{M}) \geq  \sum_i \abs{\matrixel{m_i}{\sqrt{\rho}}{m_i}}^2 \,.
\end{align}
Define the set of 
real numbers 
$\varepsilon_i:= \frac{\tr \sqrt{\rho}}{d} - \matrixel{m_i}{\sqrt{\rho}}{m_i}$. 
\\
Note that the constraint 
\begin{align}
    \tr\sqrt{\rho}= \sum_{i} \matrixel{m_i}{\sqrt{\rho}}{m_i} = \tr\sqrt{\rho}- \sum_{i} \varepsilon_i
\end{align}
implies that $\sum_i \varepsilon_i=0$.
In terms of $\{\varepsilon_i\}$, the bound in \eqref{eqn: ep_bound} is
\begin{align}
P_{\textrm{guess}}(\rho,\mathcal{M}) &\geq  \sum_i \bigg({\frac{\tr\sqrt{\rho}}{d}-\varepsilon_i}\bigg)^2 \nonumber
\\
&= \frac{(\tr\sqrt{\rho})^2}{d}-\frac{2\tr\sqrt{\rho}}{d}\sum_i \varepsilon_i+\sum_i \varepsilon_i^2 \nonumber
\\
&= P^{*}_{\textrm{guess}}(\rho) + \sum_i \varepsilon_i^2\,,
\end{align}
so the optimal $P^{*}_{\textrm{guess}}(\rho)$ cannot be achieved if any $\varepsilon_i \neq 0$, i.e. if the condition
\ba
\matrixel{m_i}{\sqrt{\rho}}{m_i}=\frac{1}{d}\tr\sqrt{\rho}&\textrm{for all}& i=1,...,d\,
\ea
is not satisfied.

\subsection{Eve's optimal measurement for $H^{*}_{\textnormal{min}}$}\label{app: Eve_discrim}
Denote the pure state shared by Alice and Eve by $\ket{\Psi}_{AE}$. Its Schmidt decomposition is
\begin{align}
    \ket{\Psi}_{AE}= \sum_{k=1}^{d}\sqrt{\lambda_k}\ket{u_k}_A\ket{v_k}_E\,,
\end{align}
where $\rho=\sum_{k=1}^{d}\lambda_k\ketbra{u_k}{u_k}$ and $\{\ket{v_k}\}$ is an orthonormal basis 
in which Eve's reduced state in diagonal. We can assume without loss of generality that
Eve's subsystem has the same dimension as Alice's subsystem, as Eve does not gain any advantage in discriminating Alice's states by holding 
a system of a higher dimension. 
Let Eve's local eigenbasis $\{\ket{v_k}\}$  
be related to that of Alice by the unitary $\ket{v_k}=U\ket{u_i}$. Then we can write Eve's 
reduced state as $\sigma_{E}=U\rho\, U^{\dagger}$. After Alice performs the rank-one measurement in the basis $\{\ket{m_i}\}$, Eve receives the subnormalised pure states $\ket{\tilde{\gamma}_i}= \sum_{k=1}^{d} \sqrt{\lambda_k}\langle{m_i|u_k} \rangle\ket{v_k}$, which can also be represented as 
\begin{align}
    \ket{\tilde{\gamma}_i}= \sum_{k=1}^{d} \sqrt{\lambda_k}\langle{m_i|u_k} \rangle\, U\ket{u_k}= U \sqrt{\rho}\ket{m_i^{*}}\,,
\end{align}
where $\ket{m_i^{*}}$ is the complex conjugate of $\ket{m_i}$ in the eigenbasis of $\rho$. Eve should then use an optimal rank-one measurement to discriminate
between the possible states, $\ket{\tilde{\gamma}_i}$, of her system. 
Note that, since Alice chooses an optimal measurement,
\begin{align}
  \frac{\tr\sqrt{\rho}}{d}&=\langle{m_{i}|\sqrt{\rho}|m_i} \rangle=\langle{m_{i}^{*}| \sqrt{\rho} |m_i^{*}}\rangle\,. \nonumber
\end{align}
In Appendix \ref{app:thm2}, we showed that a measurement in an orthogonal basis $\{\ket{n_i}\}$ is optimal to discriminate an ensemble of states $\{\ket{\tilde{\gamma}_i}=A\ket{n_i}\}$ if the condition $\langle n_i|A |n_i \rangle=\frac{1}{d}\tr A$ holds for all $i$, where $A \geq 0$. Here Eve receives the states $\ket{\tilde{\gamma}_i}=U\sqrt{\rho}\ket{m_i^{*}}=A\ket{n_i}$, where $A:= U \sqrt{\rho}\,U^{\dagger} \geq 0$ and $\ket{n_i}:=U\ket{m_i^{*}}$. We have $\frac{1}{d}\tr A= \frac{1}{d}\tr \big(U \sqrt{\rho}\, U^{\dagger}\big)=\frac{1}{d}\tr \sqrt{\rho}$ and 
\begin{align}
  \langle n_i | A | n_i\rangle= \langle m_i^{*}|\sqrt{\rho}|m_i^{*}\rangle=\frac{1}{d}\tr \sqrt{\rho}=\frac{1}{d}\tr A\,,  
\end{align}
so it is optimal for Eve to measure in the basis $\ket{n_i}=U\ket{m_i^{*}}$. In the case where $\rho$ is full-rank, we have 
\begin{align}
U\ket{m_i^{*}}=U \rho^{-1/2} U^{\dagger} \ket{\tilde{\gamma}_i}=\sigma^{-1/2}_{E}\ket{\tilde{\gamma}_i}\,,    
\end{align}
so Eve is performing a `pretty good' measurement.

\section{Alternative proof of Theorem \ref{thm: main}}
\label{app:proof2}
\subsection{Proof of the bound \eqref{mainlemma1} on $\pg$}\label{app: bound}
Here we provide an alternative derivation of the lower bound \eqref{mainlemma1} for Eve's optimal guessing probability. We make use of the min- and max-entropies defined in Appendix \ref{app:entropy_defs} and of the following lemma, which is a slight variation \footnote{In \cite[Lemma 3.1.13]{renner2006security}, on which Lemma \ref{lemma:renner} is based, Eq. \eqref{eq:min-entropy-ineq} is stated in terms of $H_{\textrm{min}}(\rho^{AE}|\sigma^{E})$ and $\geq H_{\textrm{min}}(\tilde{\rho}^{AE}|\sigma^{E})$, and for any $\sigma_E$ (c.f. Eq. \eqref{eq:relative-min-entropy}). It is easy to see that \eqref{eq:min-entropy-ineq} also holds by letting $\sigma_E$ be one state achieving the maximum in $H_{\textrm{min}}(A|E)_{\tilde{\rho}^{AE}}$.} of \cite[Lemma 3.1.13]{renner2006security}:
\begin{lemma}\label{lemma:renner}
Let $\{\ket{x}\}_x$ be an orthonormal basis on $\mathcal{H}_X$, let $\{\ket{\psi^x}\}_x$ be a family of unnormalized
vectors on $\mathcal{H}_A \otimes \mathcal{H}_E$, and define
\begin{align*}
\rho_{AE}&:=\ketbra{\psi}\textnormal{ with }\ket{\psi}=\sum_x\ket{\psi_x}\,,\\
\tilde{\rho}_{XAE}&:=\sum_x \ketbra{x}\otimes \ketbra{\psi_x}\,.
\end{align*}
Then, 
\begin{align}\label{eq:min-entropy-ineq}
H_{\textnormal{min}}(A|E)_{\rho_{AE}} \geq H_{\textnormal{min}}(A|E)_{\tilde{\rho}_{AE}}-H_{\textnormal{max}}(X)\,.
\end{align}
\end{lemma}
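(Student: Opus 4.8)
The plan is to work directly with the semidefinite characterisation of the conditional min-entropy from Definition~\ref{defn: Hmin} and to isolate the single analytic inequality that produces the $H_{\textnormal{max}}(X)$ penalty. First I would fix the side-information state: let $\sigma_E$ attain the maximum in $H_{\textnormal{min}}(A|E)_{\tilde\rho_{AE}}$, and set $\mu := H_{\textnormal{min}}(A|E)_{\tilde\rho_{AE}}$, where $\tilde\rho_{AE}=\tr_X\tilde\rho_{XAE}=\sum_x\ketbra{\psi_x}$. By Definition~\ref{defn: Hmin} this choice yields the operator inequality
\begin{equation}
2^{-\mu}\,(\id_A\otimes\sigma_E)\geq\tilde\rho_{AE}=\sum_x\ketbra{\psi_x}\,.
\end{equation}

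The key step is a single operator Cauchy--Schwarz estimate relating the coherent state $\rho_{AE}=\ketbra{\psi}$ to its dephased marginal $\tilde\rho_{AE}$. Let $r$ be the number of indices with $\ket{\psi_x}\neq0$; since $\tilde\rho_X=\sum_x\braket{\psi_x}{\psi_x}\ketbra{x}$ is diagonal with exactly $r$ nonzero entries, $r=\mathrm{rank}(\tilde\rho_X)$, so $2^{H_{\textnormal{max}}(X)}=r$. For any vector $\ket\phi$, writing $c_x=\braket{\phi}{\psi_x}$ (which vanishes whenever $\ket{\psi_x}=0$) and applying Cauchy--Schwarz over the $r$ indices with $\ket{\psi_x}\neq0$,
\begin{equation}
\matrixel{\phi}{\rho_{AE}}{\phi}=\Big|\sum_x c_x\Big|^2\leq r\sum_x|c_x|^2=r\,\matrixel{\phi}{\tilde\rho_{AE}}{\phi}\,,
\end{equation}
so $\rho_{AE}\leq r\,\tilde\rho_{AE}$ as operators. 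I expect this to be the crux of the argument; everything else is bookkeeping with the definitions, and the only place where any care is needed is tracking the sign conventions in Definition~\ref{defn: Hmin} and the identification $2^{H_{\textnormal{max}}(X)}=r$.

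Combining the two displays gives $\rho_{AE}\leq r\,\tilde\rho_{AE}\leq 2^{-(\mu-\log_2 r)}(\id_A\otimes\sigma_E)$, which by Definition~\ref{defn: Hmin} means $H_{\textnormal{min}}(\rho_{AE}|\sigma_E)\geq\mu-\log_2 r=H_{\textnormal{min}}(A|E)_{\tilde\rho_{AE}}-H_{\textnormal{max}}(X)$. Finally, since $H_{\textnormal{min}}(A|E)_{\rho_{AE}}=\max_{\sigma_E'}H_{\textnormal{min}}(\rho_{AE}|\sigma_E')\geq H_{\textnormal{min}}(\rho_{AE}|\sigma_E)$, the claimed bound \eqref{eq:min-entropy-ineq} follows. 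This last optimisation is exactly the ``slight variation'' of \cite[Lemma 3.1.13]{renner2006security} flagged in the footnote: the original statement holds for an arbitrary fixed $\sigma_E$ (and is proved by the same operator Cauchy--Schwarz estimate), and one simply feeds in the optimiser for $\tilde\rho_{AE}$ to obtain the optimised form. Alternatively, if one prefers not to reprove the fixed-$\sigma_E$ inequality, one may cite \cite[Lemma 3.1.13]{renner2006security} directly for the bound $H_{\textnormal{min}}(\rho_{AE}|\sigma_E)\geq H_{\textnormal{min}}(\tilde\rho_{AE}|\sigma_E)-H_{\textnormal{max}}(X)$ and then apply only the optimisation step.
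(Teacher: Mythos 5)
Your proof is correct, and it follows the same route the paper takes: the paper simply invokes \cite[Lemma 3.1.13]{renner2006security} for the fixed-$\sigma_E$ inequality and notes in a footnote that one feeds in the optimiser for $H_{\textnormal{min}}(A|E)_{\tilde\rho_{AE}}$, which is exactly your final step. The only difference is that you also reprove the cited inequality itself via the operator Cauchy--Schwarz bound $\rho_{AE}\leq r\,\tilde\rho_{AE}$ with $r=\mathrm{rank}(\tilde\rho_X)=2^{H_{\textnormal{max}}(X)}$, which is the standard argument from Renner's thesis and makes your write-up self-contained where the paper defers to the reference.
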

Consider a pure bipartite state $\rho_{AE}=\ket{\psi}\bra{\psi}$. Given any orthogonal basis $\{\ket{u_x}\}_x$ for $\mathcal{H}_{A}$, some orthogonal basis $\{\ket{v_y}\}_y$ of $\mathcal{H}_{E}$ can always be found such that the state vector $\ket{\psi}$ can be written as 
\begin{align}\label{eqn: ineq_Hmin}
    \ket{\psi}= \sum_{x, y}\alpha_{xy}\ket{u_x}\ket{v_y}= \sum_{x}\ket{u_x}\sum_{y}\alpha_{xy}\ket{v_j} := \sum_{x}\ket{\psi^{x}}\,.
\end{align} 
If the rank-one projective measurement $\{\ketbra{u_x}\}_x$ is performed on $A$ and the outcome $x$ is obtained, the (unnormalized) post-measurement state of $AE$ is
\begin{align*}
    \Big(\ketbra{u_x}{u_x} \otimes \id \Big) \rho^{AE} \Big(\ketbra{u_x}{u_x} \otimes \id \Big) &=  \ketbra{\psi^x}{\psi^x}.
\end{align*}
Therefore, the cq-state representing the correlations between the outcomes and the post-measurement states on $\mathcal{H}_A\otimes\mathcal{H}_E$ is, precisely as in Lemma \ref{lemma:renner},
\begin{align*}
\tilde{\rho}_{AEX} = \sum_x \ketbra{\psi^x}\otimes \ketbra{x}\,..
\end{align*}

To prove our target bound \eqref{mainlemma1}, we let $\rho_{AE}$ be the initial pure state shared by Alice and Eve and $\mathcal{M}=\{\ketbra{u_x}\}_x$ be Alice's rank-one projective measurement. Noting that $H_{\textrm{max}}(X)= \log_2\textrm{rank}(\tilde{\rho}_X) \leq \log_2d$
and that for any pure state $\rho_{AE}$
\begin{align}
  H_{\textrm{min}}(A|E)= -2 \log_{2} \sqrt{\rho_{A}}\,,   
\end{align}
returning to our usual notation with $\rho_{A}= \rho$ and $d_{A}=d$, we get from Lemma \ref{lemma:renner} the following upper bound on $H_{\textrm{min}}(A|E)_{\tilde{\rho}_{AE}}$, 
\begin{align}\label{eqn: final_ineq_Hmin}
\log_{2}d-2 \log_{2} \tr \sqrt{\rho} \geq H_{\textrm{min}}(A|E)_{\tilde{\rho}_{AE}}\,.   \end{align}

Finally, notice that, since Alice's post-measurement states form an orthonormal basis, the states $\tilde{\rho}_{XE}$ and $\tilde{\rho}_{AE}$ are related by a unitary on the first subsystem. Therefore, using twice the data processing inequality for conditional min-entropies \cite{renner2006security}, one can replace $H_{\textrm{min}}(A|E)_{\tilde{\rho}_{AE}}$ with $H_{\textrm{min}}(X|E)_{\tilde{\rho}_{XE}}$ in \eqref{eqn: final_ineq_Hmin} and, together with \eqref{eqn: H and P}, obtain 
\begin{align}\label{eqn: p_bound}
P_{\textrm{guess}}(\rho, \mathcal{M}) \geq \frac{1}{d}\Big(\tr \sqrt{\rho} \Big)^2.  
\end{align}

\subsection{Proof that the bound \eqref{mainlemma1} can be reached}
Here, we provide an alternative proof using semidefinite programming that the lower bound \eqref{mainlemma1} for Eve's optimal guessing probability can be reached when Alice performs measurements satisfying \eqref{eqn: equality}.
We consider the maximisation problem \eqref{eqn: sdp} for the guessing probability and reduce our study to the set of projective rank-one measurements $M_i=\dyad{m_i}$ in dimension $d$ that satisfy $\tr\Big(\ketbra{m_i}{m_i} \sqrt{\rho} \Big)= \frac{1}{d}\tr\sqrt{\rho}$ for all $\ket{m_i}$ (the set is certainly not empty, since all measurements in a basis that is unbiased with the eigenbasis of $\rho$ satisfy this condition).
From hereon, we assume without loss of generality that $\rho$ is full-rank, as we note that only the projection of the measurement elements $M_i$ onto the support of $\rho$ plays a role in \eqref{eqn: sdp}. More precisely, defining $\Pi_{\rho}$ as the orthogonal projector onto the support of $\rho$ and $M_{i}':= \Pi_{\rho}M_i \Pi_{\rho}$, we have 
\begin{align}
\max_{\{\rho_i\}}\sum_i \tr(M_i\rho_i)&= \max_{\{\rho_i\}}\sum_i \tr(M_i\,\Pi_{\rho}\rho_i\Pi_{\rho}) \nonumber
\\
&= \max_{\{\rho_i\}}\sum_i \tr(M_i'\rho_i)\,.    
\end{align}
Note too that 
\begin{align}
\tr\Big(M_i \sqrt{\rho} \Big)= \tr\Big(M_i' \sqrt{\rho} \Big)=\frac{1}{d}\tr\sqrt{\rho}\,.    
\end{align}
Then, in the case where $\rho$ is not rank-one, we can consider the problem projected onto the support of $\rho$, using the rank-one (not necessarily projective) measurement with elements $M_i'=\dyad{m_i'}$ in \eqref{eqn: sdp}, where $M_i'$ is in dimension $r=\textrm{rank}(\rho)$, has $d$ outcomes and satisfies $\tr\Big(M_i' \sqrt{\rho} \Big)=\frac{1}{d}\tr\sqrt{\rho}$.
\par Now with the full-rank $\rho$ assumption, we note that, since \eqref{eqn: sdp} is a semidefinite programming problem, we can define its corresponding minimisation (or dual) problem, which is given by  
\begin{align}\label{eqn: min}
    \beta(\rho, \mathcal{M}) &= \underset{X}{\textrm{min}} \tr\big( X\rho\big)\,, \qquad \textrm{s.t.} \,\,\, X \geq M_i\,.
\end{align}
The set of states $\{\rho_{i}=\rho/d\}$ and the matrix $X=2\id$ define strongly feasible points (i.e. points that satisfy the necessary constraints with strict inequalities) on the dual and primal problems respectively, so \eqref{eqn: sdp} and \eqref{eqn: min} both return $P_{\textrm{guess}}(\rho, \, \mathcal{M})$. For further details on semidefinite programming problems, see, for example, \cite{Skrzypczyk_2023}. Given the measurement $\mathcal{M}$, we can use the primal and dual problems to set upper and lower bounds respectively on Eve's optimal guessing probability, 
\begin{align}\label{eqn: bounded_P}
    \sum_{i}\tr\Big(\rho_{i} M_i \Big) \leq \pg(\rho,\mathcal{M}) \leq \tr\Big( X \rho \Big)\,,
\end{align}
where $\{\rho_i\}$ is any set of subnormalised states satisfying $\sum_{i}\rho_{i}=\rho$ and $X$ is any positive-semidefinite matrix satisfying $X - M_i \geq 0$ for all $M_i$. The set of states $\rho_i= \sqrt{\rho}\dyad{m_i}\sqrt{\rho}$ recovers the lower bound \eqref{mainlemma1}. 
We now show that the matrix $X=\frac{\tr\sqrt{\rho}}{d}\rho^{-\frac{1}{2}}$ 
satisfies 
\begin{align}\label{eqn: Xmat}
    X - \dyad{m_i} = \frac{\tr\sqrt{\rho}}{d}\rho^{-\frac{1}{2}}- \dyad{m_i} \geq 0 \qquad \forall\,\, i\,.
\end{align}
We can define any vector in dimension $d$ as $\sqrt{\rho}\ket{\phi}$ for some $\ket{\phi}$. To prove \eqref{eqn: Xmat}, it suffices to show that
\begin{align}\label{eqn: Xpos}
 &\matrixel{m_i}{\sqrt{\rho}}{m_i}\matrixel{\phi}{\sqrt{\rho}}{\phi}-\matrixel{\phi}{\sqrt{\rho}}{m_i}\matrixel{m_i}{\sqrt{\rho}}{\phi} \geq 0\,.  
\end{align}
In analogy with the proof in \ref{app:thm2}, we see that \eqref{eqn: Xpos} is true by applying the Cauchy--Schwarz inequality on the vectors $\rho^{\frac{1}{4}}\ket{m_i}$ and $\rho^{\frac{1}{4}}\ket{\phi}$. 

Using $X=\frac{\tr\sqrt{\rho}}{d}\rho^{-\frac{1}{2}}$ in \eqref{eqn: bounded_P}, we find 
\begin{align}
 \frac{1}{d}\Big(\tr \sqrt{\rho} \Big)^2 \leq P_{\textnormal{guess}}(\rho,\mathcal{M}) \leq \frac{1}{d}\Big(\tr \sqrt{\rho} \Big)^2\,.   
\end{align}
This shows that the bound \eqref{mainlemma1} can be saturated by measurements satisfying \eqref{eqn: equality}, so we have that the optimal guessing probability for Alice given the state $\rho$ is 
\begin{align}
  P_{\textrm{guess}}^{*}(\rho) =\frac{1}{d}\Big(\tr \sqrt{\rho} \Big)^2\,.     
\end{align}

\section{Technical details in the proof of Theorem \ref{thm: Hmax}}\label{app: Hmax_details}

We denote the Schmidt decomposition of the pure state $\ket{\Psi}_{AE}$ shared by Alice and Eve by
\begin{align}
    \ket{\Psi}_{AE}= \sum_{k=1}^{d}\sqrt{\lambda_k}\ket{u_k}_A\ket{v_k}_E\,,
\end{align}
where $\rho=\sum_{k=1}^{d}\lambda_k\ketbra{u_k}{u_k}$ and $\{\ket{v_k}\}$ is an orthonormal basis 
in which Eve's reduced state in diagonal. When Alice measures in an orthonormal basis $\{\ket{m_x}\}$ with dimension $d$, we can write Eve's post-measurement states $\{\ket{\psi_{x}^{E}}\}$ as
\begin{align}
\ket{\psi_{x}^{E}} = \frac{1}{\sqrt{p_x}} \langle m_x| \Psi_{AE} \rangle\,,  \quad p_x = \langle m_x | \rho | m_x \rangle  \,. 
\end{align}
Denoting Eve's average state post-measurement as $\tilde{\rho}_{E}$, i.e.
\begin{align}
  \tilde{\rho}_{E}=\sum_{x}p_x \ketbra{\psi_{x}^{E}}{\psi_{x}^{E}}\,,  
\end{align}
we note that 
\begin{align}
\tilde{\rho}_{E}= \sum_{x} \langle m_x | \Psi_{AE} \rangle \langle \Psi_{AE} | m_x\rangle = \tr_{A}\ketbra{\Psi_{AE}}{\Psi_{AE}} := \rho_E\,,
\end{align}
where $\rho_{E}$ is Eve's local state before the measurement. The following expression for $\hmax \left(X|E \right)$ in the special case of classical-quantum states is given in \cite{Konig_2009},
\begin{align}
    \hmax\left (A|E \right)= \log_2 \psecr\,,
\end{align}
where
\begin{align}\label{eqn: psec_Konig_app}
 \psecr = \max_{\sigma} \left( \sum_{x} \sqrt{p_x} F \left(\rho_{x}^{E}, \sigma \right) \right)^2\,,   
\end{align}
where $\sigma$ is a quantum state ($\sigma \geq 0$ and $\tr \sigma=1$) and
\begin{align}
    F \left(\rho, \sigma \right)= \tr \sqrt{\sqrt{\sigma} \rho \sqrt{\sigma} }
\end{align}
(note that $F \left(\rho, \sigma \right)$ is symmetric in $\rho$ and $\sigma$). Since, in our case, the states $\rho_{x}^{E}$ are pure, the expression \eqref{eqn: psec_Konig_app} reduces to 
\begin{align}\label{eqn: psecr_trace_app}
\psecr &= \max_{\sigma} \left( \sum_{x}  \sqrt{{p_x} \langle \psi_{x}^{E}|\sigma | \psi_{x}^{E}} \rangle
  \right)^2
\\
&= \max_{\sigma} \left( \sum_{x}  \sqrt{{p_x}\tr\left( { \sigma \ketbra{\psi_{x}^{E}}{\psi_{x}^{E}}}\right) } \right)^2\,.
\end{align}
We define by $\sigma^{*}$ the state $\sigma$ that achieves the maximisation in \eqref{eqn: psec_Konig_app}. By the Cauchy-Schwartz inequality, we have 
\begin{align}
  \psecr &= \left( \sum_{x}  \sqrt{{p_x}\tr \left( { \sigma^{*} \ketbra{\psi_{x}^{E}}{\psi_{x}^{E}}} \right)} \right)^2 
  \\
  &\leq d \sum_{x} p_x \tr \left( { \sigma^{*} \ketbra{\psi_{x}^{E}}{\psi_{x}^{E}}} \right) = d  \tr \left( \sigma^{*} {\rho}_E \right)\,. 
\end{align}
The inequality is saturated if and only if all of the terms inside the sum are identical, 
\begin{align}\label{eqn: CS_conds_app}
    p_x \tr \left( { \sigma^{*} \ketbra{\psi_{x}^{E}}{\psi_{x}^{E}}} \right) = \frac{\tr \big( \sigma^{*} {\rho}_E \big)}{d} \;\; \textrm{for all} \;\; x=1,...,d\,.
\end{align}
We find a second inequality for $\psecr$ by noting that 
\begin{align}\label{eqn: bound_E_app}
 \tr \big( \sigma^{*} {\rho}_E \big) \leq \max_{\sigma} \tr \big( \sigma {\rho}_E \big) = \lmax(\rho_{E})\,,   
\end{align}
where the second term is a known SDP that returns the largest eigenvalue of $\rho_E$. In the case where the largest eigenvalue of ${\rho}_E$  is non-degenerate, the maximisation is achieved only by $\sigma=\ketbra{\vmax}{\vmax}$, where $\ket{\vmax}$ is the vector from Eve's local basis $\{\ket{v_i}\}$ corresponding to the largest eigenvalue. Restricting for now to the case where the largest eigenvalue is non-degenerate, we see that the bound  \eqref{eqn: bound_E_app}  is reached if and only if $\sigma^{*}=\ketbra{\vmax}{\vmax}$ and \eqref{eqn: CS_conds_app} is satisfied. We are free to combine these conditions such that the necessary and sufficient conditions for the measurement basis $\{\ket{m_x}\}$ to achieve the bound \eqref{eqn: bound_E_app} are, from \eqref{eqn: CS_conds_app},
\begin{align}
    \abs{ \langle m_x | \umax \rangle  }^{2} = \frac{1}{d} \; \; \textrm{for all} \;\; x=1,...,d\,,
\end{align}
where $\ket{\umax}$ is the vector from Alice's local basis $\{\ket{u_i}\}$ corresponding to the largest eigenvalue. In the case where the largest eigenvalue of $\rho$ is degenerate, denote by $\{\ket{\vmax^{(i)}}\}$ the set of vectors in Eve's local basis corresponding to the largest eigenvalue. The optimal $\sigma$ in \eqref{eqn: bound_E_app} is now of the form 
\begin{align}
   \sigma=  \sum_{i} \gamma_i \ketbra{\vmax^{(i)}}{\vmax^{(i)}}\,, \quad \gamma_i \geq 0\,, \quad \sum_{i}\gamma_i=1\,.
\end{align}
Now the necessary and sufficient conditions for $\{\ket{m_x}\}$ to achieve the bound are
\begin{align}
    \sum_i {\gamma_i} \abs{\langle{m_x}|{\umax^{(i)}\rangle}}^{2}= \frac{1}{d} \;\; \textrm{for all} \;\; x=1,...,d\,,
\end{align}
where $\{\gamma_i\}$ is any set of non-negative numbers summing to 1 and $\{\ket{\umax^{(i)}}\}$ is the set of vectors in Alice's local basis corresponding to the maximum eigenvalue. 

Note, finally, that $\lmax(\rho_E)= \lmax(\rho)$, so in terms of Alice's state $\rho$ we have 
\begin{align}
    \psecr^{*} = d \lmax(\rho)
\end{align}
and 
\begin{align}
    \hmax^{*}\left (A|E \right) = \log_2 d + \log_2 \lmax(\rho)\,.
\end{align}

\section{Additional details from Section \ref{sec: case_studies}}\label{app: case_studies}

\subsection{Parameters for qutrit measurement \eqref{eqn: qutrit_meas}}\label{app: qutrit}
For convenience, we restate the measurement \eqref{eqn: qutrit_meas} in the following. Consider the measurement basis $\{M_i=\dyad{m_i}\}_{i=1,2,3}$, with 
\begin{align}
    \ket{m_1 }&=\sqrt{\frac{1+a}{3}}\ket{1}+\sqrt{\frac{1+b}{3}}\ket{2}+\sqrt{\frac{1+c}{3}}\ket{3}\,, \nonumber
    \\
    \ket{m_2 }&=\sqrt{\frac{1+a}{3}}\ e^{i\theta_1}\ket{1}+\sqrt{\frac{1+b}{3}}\ket{2}+\sqrt{\frac{1+c}{3}}\ e^{i\theta_2}\ket{3}\,,
\end{align}
and $\ket{m_3}$ defined by the normalization condition $\sum_i M_i=\id$, 
where $a = -({\gamma_2}-{\gamma_3})k,\, b = ({\gamma_1}-{\gamma_3})k,\, c = -({\gamma_1}-{\gamma_2})k,\,  k\in \mathbb{R}$ and each $\gamma_i \geq 0$ with $\gamma_1 \geq \gamma_2 \geq \gamma_3$. 
To ensure that the square root terms in the coefficients of $\ket{m_1}$ and $\ket{m_2}$ are well-defined, we need that $1+x \geq 0$, $x \in \{a, b, c\}$, which imposes the following constraint on $k$,
\begin{align}\label{eqn: k_first_cond}
  -\frac{1}{{\gamma_1}-{\gamma_3}}\leq k \leq
\frac 1{\max\{{\gamma_1}-{\gamma_2},{\gamma_2}-{\gamma_3}\}}\,.  
\end{align}
Furthermore, imposing $\langle{m_1|m_2}\rangle=0$ sets the following restriction, 
\begin{align}\label{eqn: abc_rest}
 \frac{1+a}{3}e^{i\theta_1}+ \frac{1+b}{3}+ \frac{1+c}{3}e^{i\theta_2} =0\,.   
\end{align}
Considering each of the terms in the sum as vectors in the complex plane, the sum $\frac{1+a}{3}e^{i\theta_1}+ \frac{1+b}{3}$ can take any absolute value in the range $\frac{1}{3}|b-a|$ to $\frac{1}{3}|2+a+b|$ by taking an appropriate choice of $\theta_1$. We then see that in order to satisfy \eqref{eqn: abc_rest}, $\frac{1+c}{3}$ must fall in this range, i.e. 
\begin{align}
    |b-a| \leq 1+c \leq 2 + a+ b\,.
\end{align}
This gives the following restriction on $k$,
\begin{align}
    - \frac{1}{2}\frac{1}{{\gamma_1}-{\gamma_{3}}} \leq k \leq \frac{1}{2}\frac{1}{{\gamma_1}-{\gamma_{3}}}\,,
\end{align}
which is a tighter bound than the constraint \eqref{eqn: k_first_cond}. Notice that the tightest constraint is $-\frac{1}{2}\leq k\leq \frac{1}{2}$ and is obtained for pure states. 

For $k \neq 0$, this measurement basis is unbiased to the eigenbasis of $\rho$ if and only if $a=b=c=0$ i.e. when $\rho$ is the maximally mixed state.
When we set $\{\gamma_i\}=\{\sqrt{\lambda_i}\}$, it is easy to show that $\matrixel{m_i}{\sqrt{\rho}}{m_i}=\frac{1}{3}\tr\sqrt{\rho}$ for $i=1,2,3$, so condition \eqref{eqn: equality} for maximal $H_{\textnormal{min}}$ is satisfied.
However, this measurement does not satisfy the necessary and sufficient condition \eqref{eqn: H equality} to maximise the conditional entropy, except in the special case where
\small
\begin{align}
\lambda_{1}(\sqrt{\lambda_{2}}-\sqrt{\lambda_3}) - \lambda_{2}(\sqrt{\lambda_1}+\sqrt{\lambda_{3}})-\lambda_{3}(\sqrt{\lambda_1}-\sqrt{\lambda_2})=0\,.  \end{align}
\normalsize
Moreover, we can also consider this qutrit measurement taking $\{\gamma_i\}=\{\lambda_i\}$. Similarly, it is straightforward to see that this measurement satisfies \eqref{eqn: H equality} and thus achieves the maximal conditional entropy $H^{*}$.
It does not, however, satisfy the necessary and sufficient condition \eqref{eqn: equality} to achieve $H^{*}_{\textnormal{min}}$ except in the special case where
\begin{align}
    \sqrt{\lambda_1}(\lambda_2-\lambda_3)-\sqrt{\lambda_2}(\lambda_1-\lambda_3)+\sqrt{\lambda_3}(\lambda_1-\lambda_2)=0\,.
\end{align}
Finally, when $\lambda_1 > \lambda_2$ (i.e. the largest eigenvalue of $\rho$ is not degenerate),  the condition \eqref{eqn: Hmax equality} for the measurement \eqref{eqn: qutrit_meas} to produce maximal $\hmax$ is satisfied if and only if $a=0$, so $\gamma_2=\gamma_3$. This condition is, in general, inequivalent to the conditions for maximal $H_{\textnormal{min}}$ and $H$.

\subsection{A two-qubit basis that has no unbiased product basis}\label{app:prodbasis}
We will show that there is no orthonormal basis of two-qubit product states that is unbiased to the basis:
\begin{equation}
    \begin{aligned}\label{eq:entbasis}
    \ket{\psi_1}&=\ket{00}\\
    \ket{\psi_2}&=\frac{1}{\sqrt{3}}(\ket{01}+\ket{10}+\ket{11})\\
    \ket{\psi_3}&=\frac{1}{\sqrt{3}}(\ket{01}+\omega\ket{10}+\omega^2\ket{11})\\
    \ket{\psi_4}&=\frac{1}{\sqrt{3}}(\ket{01}+\omega^2\ket{10}+\omega\ket{11}),
    \end{aligned}
\end{equation}
where $\omega=e^{2\pi i/3}$. Firstly, we note that any orthonormal two-qubit product basis can be expressed as either 
\begin{enumerate}[(i)]
    \item\label{case1} $\left\{\ket{a}\ket{A},\ket{a}\ket{A^\perp},\ket{a^\perp}\ket{B},\ket{a^\perp}\ket{B^\perp}\right\}$ or
    \item\label{case2} $\left\{\ket{a}\ket{A},\ket{a^\perp}\ket{A},\ket{b}\ket{A^\perp},\ket{b^\perp}\ket{A^\perp} \right\}$
\end{enumerate}
for some single qubit states $\ket{a},\ket{A},\ket{b},\ket{B}$, where $\perp$ denotes the unique orthogonal state to a given qubit state. Consider Case~\eqref{case1}. If this basis is unbiased to the vectors~\eqref{eq:entbasis}, we have 
\begin{equation}\label{eq:mu00}
    \abs{\braket{00}{aA}}^2=\abs{\braket{00}{aA^\perp}}^2=\frac14.
\end{equation}
Letting $\ket{A}=A_0\ket{0}+A_1\ket{1}$, etc, with $A_0\in\mathbb{R}$, $A_1\in\mathbb{C}$, it follows that $A_0=A^\perp_0={1}/{2a_0}$. In order for $\ket{A}$ and $\ket{A^\perp}$ to be orthogonal and have the same coefficient of $\ket{0}$, both states must lie on the $XY$-plane of the Bloch sphere, i.e. $\ket{A}=\frac{1}{\sqrt{2}}(\ket{0}+e^{iy}\ket{1})$ and $\ket{A^\perp}=\frac{1}{\sqrt{2}}(\ket{0}-e^{iy}\ket{1})$ for some $0\leq y<2\pi$. The relation between $a_0$ and $A_0$ implies that $\ket{a}$ also lies on this plane and we can take $\ket{a}=\frac{1}{\sqrt{2}}(\ket{0}+e^{ix}\ket{1})$ for some $0\leq x<2\pi$. Similar reasoning shows that all the states $\ket{a}, \ket{A}, \ket{b}, \ket{B}$ must lie in the $XY$-plane, in both Cases~\eqref{case1} and~\eqref{case2}.

We now show the states $\ket{aA}$ and $\ket{aA^\perp}$ cannot be mutually unbiased to both $\ket{\psi_2}$ and $\ket{\psi_3}$. Firstly, the equalities $\abs{\braket{\psi_2}{aA}}^2=\abs{\braket{\psi_2}{aA^\perp}}^2=1/4$ give 
\begin{align}\label{eqn: trig}
&\cos(x)+\cos(y)+\cos(x-y)=    \nonumber  \\ 
&\cos(x)-\cos(y)-\cos(x-y)=0\,.    
\end{align}
Thus, we find $\cos(x)=0$, giving four possibilities for the values of $(x,y)$: either $(\pi/2,3\pi/4)$, $(\pi/2,7\pi/4)$, $(3\pi/2,\pi/4)$, or $(3\pi/2,5\pi/4)$. Substitution shows that none of these pairs give $\abs{\braket{\psi_3}{aA}}^2=1/4$. A similar argument shows that bases of the form in Case~\eqref{case2} also cannot be mutually unbiased to the vectors~\eqref{eq:entbasis}.

\end{document}